\makeatletter
\def\input@path{{template/}}
\makeatother

\documentclass[runningheads]{llncs}

\newif\ifpublish
\publishtrue

\usepackage{silence}
\usepackage{type1cm}   
\usepackage{microtype}
\usepackage{amsmath}
\usepackage{amssymb}
\usepackage{algorithm}
\usepackage[noend]{algpseudocode}
\usepackage{graphicx}
\usepackage{booktabs}
\usepackage{array}
\usepackage{tabularx}     
\usepackage{tikz}
\usetikzlibrary{arrows.meta}
\usepackage{xcolor}
\usepackage{colortbl}
\usepackage{xspace}
\usepackage{float}
\usepackage{etoolbox}
\usepackage[hypertexnames=false]{hyperref}
\usepackage{cleveref}
\usepackage{enumitem}

\setlist{nosep, leftmargin=10pt, labelsep=5pt, rightmargin=0pt}
\newcounter{clause}
\renewcommand{\theclause}{\textbf{A\arabic{clause}}}
\newcommand{\clauseitem}[1]{\item\refstepcounter{clause}\label{#1}\theclause:}
\crefname{clause}{clause}{clauses}
\Crefname{clause}{Clause}{Clauses}

\hypersetup{
    colorlinks=true,
    linkcolor={magenta!80!black},
    citecolor={green!55!black},
    urlcolor={black!55},
    pdfborder={0 0 0},
}
\crefname{section}{Section}{Sections}
\Crefname{section}{Section}{Sections}
\crefname{subsection}{Section}{Sections}
\Crefname{subsection}{Section}{Sections}
\crefname{appendix}{Appendix}{Appendices}
\Crefname{appendix}{Appendix}{Appendices}
\crefname{figure}{Fig.}{Figs.}
\Crefname{figure}{Fig.}{Figs.}
\crefname{table}{Table}{Tables}
\Crefname{table}{Table}{Tables}
\crefname{equation}{Equation}{Equations}
\Crefname{equation}{Equation}{Equations}
\crefname{algorithm}{Algorithm}{Algorithms}
\Crefname{algorithm}{Algorithm}{Algorithms}
\crefname{theorem}{Theorem}{Theorems}
\Crefname{theorem}{Theorem}{Theorems}
\crefname{lemma}{Lemma}{Lemmas}
\Crefname{lemma}{Lemma}{Lemmas}
\crefname{proposition}{Proposition}{Propositions}
\Crefname{proposition}{Proposition}{Propositions}
\crefname{corollary}{Corollary}{Corollaries}
\Crefname{corollary}{Corollary}{Corollaries}
\crefname{definition}{Definition}{Definitions}
\Crefname{definition}{Definition}{Definitions}
\crefname{remark}{Remark}{Remarks}
\Crefname{remark}{Remark}{Remarks}
\crefname{observation}{Observation}{Observations}
\Crefname{observation}{Observation}{Observations}
\crefname{myclaim}{Claim}{Claims}
\Crefname{myclaim}{Claim}{Claims}
\crefname{fact}{Fact}{Facts}
\Crefname{fact}{Fact}{Facts}
\makeatletter
\crefname{evalclaim}{Claim}{Claims}
\Crefname{evalclaim}{Claim}{Claims}
\crefname{ALG@line}{line}{lines}
\Crefname{ALG@line}{Line}{Lines}
\makeatother

\spnewtheorem{observation}{Observation}{\bfseries}{\itshape}
\spnewtheorem{myclaim}{Claim}{\bfseries}{\itshape}
\spnewtheorem{fact}{Fact}{\bfseries}{\itshape}

\newcommand{\sysname}{{\upshape\textsf{Steelhead}}\xspace}
\newcommand{\mysticeti}{Mysticeti\xspace}
\newcommand{\mahimahi}{Mahi-Mahi\xspace}
\newcommand{\codelink}{
    \ifpublish
        \href{https://github.com/asonnino/mysticeti/tree/1ae2aa7ecf13990335fc04a600ca0dcf3e763a8f}{\nolinkurl{github.com/asonnino/mysticeti}} (commit \texttt{1ae2aa7})
    \else
        \url{https://anonymous.4open.science/r/steelhead}
    \fi
}
\newcommand{\leanlink}{
    \ifpublish
        \url{https://github.com/gdanezis/lean-dag} (commit \texttt{4228bc5})
    \else
        \url{https://anonymous.4open.science/r/lean-dag}
    \fi
}

\newcommand{\para}[1]{\par\medskip\noindent\textbf{#1.}~}

\newcounter{evalclaim}
\renewcommand{\theevalclaim}{\textbf{C\arabic{evalclaim}}}
\newcommand{\claimitem}[1]{\item\refstepcounter{evalclaim}\label{#1}\theevalclaim:}

\makeatletter
\patchcmd{\@makecaption}{\small}{\scriptsize}{}{\ClassError{macros}{caption patch failed}{}}
\makeatother
\makeatletter
\patchcmd{\table}{\setlength\belowcaptionskip{10\p@}}{\setlength\belowcaptionskip{4\p@}}{}{\ClassError{macros}{table caption patch failed}{}}
\makeatother
\newcounter{subfig}[figure]

\crefname{subfig}{Fig.}{Figs.}
\Crefname{subfig}{Fig.}{Figs.}
\newcommand{\subcap}[1]{\refstepcounter{subfig}%
    \par\vspace{2pt}{\scriptsize\leftskip=0pt\rightskip=0pt\parfillskip=0pt plus 1fil
        \noindent\textbf{(\alph{subfig})}~#1\par}\vspace{3pt}}

\newcommand{\algsize}{\fontsize{6.5}{7.5}\selectfont}

\AtBeginEnvironment{algorithm}{\algsize}
\AtBeginEnvironment{algorithmic}{\algsize}   
\newcommand{\algcommentsize}{\fontsize{5.5}{6.5}\selectfont}
\algrenewcommand{\algorithmiccomment}[1]{\hfill{\algcommentsize\color{black!60}$\triangleright$ #1}}
\algrenewcommand{\alglinenumber}[1]{{\color{black!60}#1:}}
\floatstyle{boxed}
\restylefloat{algorithm}
\makeatletter
\renewcommand\floatc@plain[2]{\setbox\@tempboxa\hbox{\scriptsize{\@fs@cfont #1:} #2}%
    \ifdim\wd\@tempboxa>\hsize {\scriptsize{\@fs@cfont #1:} #2\par}%
    \else\hbox to\hsize{\hfil\box\@tempboxa\hfil}\fi}
\makeatother

\newcommand{\wl}{\ensuremath{w}\xspace}                 
\newcommand{\wlof}[1]{\ensuremath{w(#1)}}               
\newcommand{\wsync}{\ensuremath{w_{\mathrm{s}}}\xspace} 
\newcommand{\wasync}{\ensuremath{w_{\mathrm{a}}}\xspace} 
\newcommand{\period}{\ensuremath{k}\xspace}             
\newcommand{\rsync}{\ensuremath{R_{\mathrm{s}}}\xspace}   
\newcommand{\rasync}{\ensuremath{R_{\mathrm{a}}}\xspace}  
\newcommand{\syncslot}{synchronous slot\xspace}
\newcommand{\asyncslot}{asynchronous slot\xspace}
\newcommand{\syncslots}{synchronous slots\xspace}
\newcommand{\asyncslots}{asynchronous slots\xspace}

\newcommand{\controlslot}{control slot\xspace}
\newcommand{\controlslots}{control slots\xspace}

\newcommand{\controlverdict}{control verdict\xspace}
\newcommand{\controlverdicts}{control verdicts\xspace}
\definecolor{syncrule}{HTML}{0072B2}        
\definecolor{asyncrule}{HTML}{009E73}       
\definecolor{controlreading}{HTML}{B07A00}  
\newcommand{\outputcolored}[1]{#1}          
\newcommand{\controlcolored}[1]{#1}           
\newcommand{\outputreading}{\outputcolored{output reading}\xspace}
\newcommand{\controlreading}{\controlcolored{control reading}\xspace}

\newcommand{\ymark}{\ensuremath{\checkmark}\xspace}
\newcommand{\nmark}{\ensuremath{\times}\xspace}
\newcommand{\pundecided}{\ensuremath{\bot}\xspace}
\newcommand{\pdag}{\text{DAG}\xspace}
\newcommand{\pgst}{\text{GST}\xspace}
\newcommand{\algvar}[1]{\texttt{#1}\xspace}   
\newcommand{\algfunc}[1]{\textsc{#1}\xspace}  

\newcommand{\pinterval}{\algvar{interval}}

\newcommand{\ptrycommit}{\algfunc{TryCommit}}
\newcommand{\ptrydirectdecide}{\algfunc{TryDirectDecide}}
\newcommand{\ptryindirectdecide}{\algfunc{TryIndirectDecide}}
\newcommand{\pupdateperiod}{\algfunc{UpdatePeriod}}
\newcommand{\pgetsubdag}{\algfunc{GetSubDag}}
\newcommand{\pgetleader}{\algfunc{GetLeader}}
\newcommand{\pwaveof}{\algfunc{WaveLength}}

\newcommand{\pfindanchor}{\algfunc{FindAnchor}}
\newcommand{\preplay}{\algfunc{Replay}}
\newcommand{\pproberate}{\algfunc{ProbeRate}}
\newcommand{\pcanary}{\algvar{canary}}
\newcommand{\pmaxperiod}{\algvar{maxPeriod}}

\title{\sysname: Interleaving Partially Synchronous and Asynchronous Commit Rules on a Shared DAG}

\ifpublish
    \author{
        Zeno de Angeli\inst{2} \and
        Philipp Jovanovic\inst{1,2} \and
        Lefteris Kokoris-Kogias\inst{1} \and
        Markus Legner\inst{1} \and
        Alberto Sonnino\inst{1,2}
    }
    \institute{Mysten Labs \and University College London (UCL)}
    \authorrunning{Z. de Angeli et al.}
\else
    \author{}
    \institute{}
\fi

\begin{document}

\maketitle
\ifpublish\else
    \pagestyle{plain}
    \thispagestyle{plain}
\fi

\begin{abstract}
  Dual-mode consensus protocols are fast when the network is partially synchronous and remain live under asynchrony. We introduce \sysname, a dual-mode mechanism that composes a partially synchronous and an asynchronous commit rule over one DAG: every $\period$-th round is decided by the asynchronous rule, whose leader a common coin reveals after the votes, and all other rounds by the partially synchronous rule. Every interval, validators replay the committed DAG under each candidate period, adopt the one with the fewest expected message delays, and fall back to $\period = 1$ when the output stalls; the asynchronous rule applied to the coin rounds alone keeps the protocol live. \sysname sends no message beyond the DAG's blocks, not even to agree on the period, and opens a coin only on the rounds that need a hidden leader. It is generic over pairs of DAG commit rules that share a committee; we instantiate it with \mysticeti~\cite{mysticeti} and \mahimahi~\cite{mahimahi} at $n \ge 3f+1$ and with the two variants of BlueBottle~\cite{bluebottle} at $n \ge 5f+1$. We prove it safe and live, and provide mechanized proofs in Lean~4. In simulation, \sysname matches the partially synchronous protocol in a healthy network, stays close to the asynchronous one when network conditions stall the partially synchronous one, and adapts quickly in both directions.
\end{abstract}

\section{Introduction}
\label{sec:introduction}

Partially synchronous BFT consensus~\cite{shoalpp,mysticeti,tendermint,pbft,shoal,hotstuff} is fast but loses liveness under asynchrony or a targeted leader attack~\cite{consensus-dos}. Asynchronous BFT~\cite{vaba,narwhal,mahimahi,dag-rider,honeybadger} stays live but pays extra message delays and a common coin. Dual-mode protocols~\cite{abraxas,icarus,parbft,ipotane,jolteon-ditto,tockowl,bdt,bullshark} aim for both.

Existing dual-mode designs all pay for the mode decision, in one of three ways (\Cref{tab:dualmode}). Serial designs, such as Ditto~\cite{jolteon-ditto} or Bolt-Dumbo Transformer~\cite{bdt}, leave the fast path on a local timer and must then agree on where the abandoned path stopped; this reconciliation is slow and fragile, as the attack of Rambaud et al.~\cite{rambaud-2pac} on Ditto's fallback showed. Parallel designs run both paths at once and pay quadratic messages even under synchrony: ParBFT1~\cite{parbft} settles the race between them with a separate binary agreement per slot, and Ipotane~\cite{ipotane} folds that agreement into its asynchronous path at the price of one extra round per slot. Bullshark~\cite{bullshark} instead moves the mode into the blocks, chosen by timeouts, and opens a coin for every leader whether the network is synchronous or not, which it can afford only because it is already slow: on its certified \pdag a commit takes six message delays, against three on an uncertified one~\cite{mysticeti,cordial-miners}. Icarus~\cite{icarus} abandons the second path altogether, rotating instead among the validators' chains, but still runs a Byzantine agreement at every switch to align what the previous path committed. We discuss these designs in \Cref{sec:related}.

\sysname makes no such decision. DAG commit rules only read the \pdag and, beyond pacing, never shape it, so two rules can share one \pdag: each round holds one leader slot, and the round number picks which rule decides it. Every \period-th slot is an \asyncslot, decided by the asynchronous rule (\rasync), and the others are \syncslots, decided by the partially synchronous one (\rsync). There is no mode vote, no path switch, and no extra message. Because reading the DAG is virtually free, the same blocks can also be read a second time, for a different purpose (\Cref{sec:adaptive-control}). The period \period acts as a dial, with $\period = 1$ giving an asynchronous protocol and $\period = \infty$ giving a partially synchronous one (\Cref{fig:idea}).

\paragraph{Insights.}
First, each rule needs a fixed number of rounds to decide a slot, called \emph{wavelength}; the two rules have different wavelengths. As in every DAG protocol, a slot left undecided after its wavelength is decided later by the first committed slot above it, its \emph{anchor}. \sysname starts the anchor search one wavelength above the undecided slot, using the wavelength of the undecided slot rather than that of the anchor. Any anchor at that height sees evidence of a direct commit, whichever rule produced it. This is the core of the safety argument. It lets \sysname skip the reconciliation step of related work~\cite{jolteon-ditto,abraxas,parbft}, agreeing on where the previous rule stopped: the slots one rule leaves pending are decided by the anchors above them, whichever rule decides those.

Second, because a change is completed by the anchors that follow it, a change of rule is safe whenever it happens. \sysname thus never has to decide a mode switch from the local observation of a number of timeouts, which is how Bullshark~\cite{bullshark} and serial designs choose their mode~\cite{jolteon-ditto,bdt}.

Third, every \period-th slot is decided by the asynchronous rule, so the \pdag carries a coin every \period rounds rather than every round. Under asynchrony an adversary can keep the known-leader slots undecided, and output stops behind the first of them. Validators therefore read the same blocks a second time, applying the asynchronous rule to the coin-carrying slots alone. This \emph{\controlreading} outputs nothing, but it keeps deciding under asynchrony, and every honest validator obtains from it the same committed block from which to compute the period (\Cref{sec:adaptive-control}). In the opposite direction, at $\period = 1$ nobody waits for a known leader, so no evidence about the partially synchronous rule would form; on a few scheduled \emph{canary rounds}, validators still wait for it, and the certificates that form are that evidence (\Cref{sec:adaptive-probes}). The period is thus a deterministic function of data every honest validator holds identically, as in Shoal~\cite{shoal} and Hammerhead~\cite{hammerhead}, and drops to $1$ when output has stalled.

Finally, validators replay the agreed data under each candidate period and keep the period that would have committed with the lowest latency (\Cref{sec:adaptive-replay}). Rounds that ran under the partially synchronous rule opened no coin, so \sysname averages over the $n$ possible leaders, which is the expectation a uniform coin would give. This is sound because the replay only ranks the two rules and commits nothing. The asynchronous rule certifies a fixed fraction of every round's blocks under any message schedule, so an adversary cannot drive that average down.

We instantiate \sysname on two pairs of uncertified DAG protocols. As an example of $n \ge 3f + 1$, we pair \mysticeti~\cite{mysticeti} with \mahimahi~\cite{mahimahi}. At $n \ge 5f + 1$, we pair the partially synchronous and asynchronous variants of BlueBottle~\cite{bluebottle}, which that paper presents as two separate protocols.

\paragraph{Contributions.}
\begin{itemize}
    \item An adaptive period selection mechanism allowing validators to recompute the best commit rule at run time from the committed \pdag alone, with no extra message and no timer.
    \item \sysname, a dual-mode mechanism in which two commit rules interpret one \pdag and the round number alone selects which of them decides each slot.
    \item Proofs of safety and liveness, machine-checked in Lean~4.
    \item An implementation of \sysname, over two types of protocols, $n \ge 3f + 1$ (\mysticeti and \mahimahi) and $n \ge 5f + 1$ (BlueBottle) on top of the authors' codebase, and simulations in various network conditions.
\end{itemize}

\begin{figure}[t]
    \centering
\begin{tikzpicture}[
        font=\scriptsize,
        slot/.style={draw, line width=0.4pt, text=white, font=\bfseries\scriptsize, minimum size=0.38cm, inner sep=0pt},
        sync/.style={slot, rectangle, draw=syncrule, fill=syncrule},
        async/.style={slot, circle, draw=asyncrule, fill=asyncrule},
        outer/.style={slot, rectangle, draw=syncrule, fill=syncrule, minimum size=0.48cm},
        inner/.style={slot, circle, draw=asyncrule, fill=asyncrule, minimum size=0.3cm},
        label/.style={anchor=east, inner sep=0pt, text=black},
        wait/.style={anchor=north, inner sep=1pt, font=\tiny, text=syncrule},
    ]
    \node[label] at (0.350, 0.000) {$\period = 8$};
    \node[sync] at (0.640, 0.000) {1};
    \node[sync] at (1.280, 0.000) {2};
    \node[sync] at (1.920, 0.000) {3};
    \node[sync] at (2.560, 0.000) {4};
    \node[sync] at (3.200, 0.000) {5};
    \node[sync] at (3.840, 0.000) {6};
    \node[sync] at (4.480, 0.000) {7};
    \node[async] at (5.120, 0.000) {8};
    \node[sync] at (5.760, 0.000) {9};
    \node[sync] at (6.400, 0.000) {10};
    \node[sync] at (7.040, 0.000) {11};
    \node[sync] at (7.680, 0.000) {12};
    \node[sync] at (8.320, 0.000) {13};
    \node[sync] at (8.960, 0.000) {14};
    \node[sync] at (9.600, 0.000) {15};
    \node[async] at (10.240, 0.000) {16};
    \node[label] at (0.350, -0.750) {$\period = 1$};
    \node[async] at (0.640, -0.750) {1};
    \node[async] at (1.280, -0.750) {2};
    \node[async] at (1.920, -0.750) {3};
    \node[async] at (2.560, -0.750) {4};
    \node[async] at (3.200, -0.750) {5};
    \node[async] at (3.840, -0.750) {6};
    \node[outer] at (4.480, -0.750) {};
    \node[inner] at (4.480, -0.750) {7};
    \node[wait] at (4.480, -1.010) {wait};
    \node[async] at (5.120, -0.750) {8};
    \node[async] at (5.760, -0.750) {9};
    \node[async] at (6.400, -0.750) {10};
    \node[async] at (7.040, -0.750) {11};
    \node[async] at (7.680, -0.750) {12};
    \node[async] at (8.320, -0.750) {13};
    \node[outer] at (8.960, -0.750) {};
    \node[inner] at (8.960, -0.750) {14};
    \node[wait] at (8.960, -1.010) {wait};
    \node[async] at (9.600, -0.750) {15};
    \node[async] at (10.240, -0.750) {16};
\end{tikzpicture}
    \caption{Slots by round number. Squares are \syncslots, decided by \rsync; circles are \asyncslots, decided by \rasync. At $\period = 8$, seven slots are decided by \rsync and the eighth carries a coin; at $\period = 1$ every slot is an \asyncslot. A canary round, drawn as both shapes, is an \asyncslot on which validators still wait for the known leader.}
    \label{fig:idea}
\end{figure}

\section{Background and Model}
\label{sec:model}

\paragraph{System model.} \label{sec:model-system}
The system consists of $n$ validators, of which at most $f$ are Byzantine, and every quorum has size $n - f$; each instantiation fixes the resilience (in our evaluation either $n \ge 3f + 1$ or $n \ge 5f + 1$). Validators communicate over point-to-point channels, and the DAG substrate carries the base protocols' standard assumptions of signed blocks and hash-based causal references. \sysname itself adds no cryptography and no messages.

Under partial synchrony~\cite{dls}, message delays between honest validators are bounded by a known $\Delta$ after \pgst. Under asynchrony~\cite{flp}, an adaptive adversary schedules delivery but must eventually deliver every message.
For a coin-based \rasync we assume a common coin as a black box~\cite{dag-rider,mahimahi,bullshark}, with three properties: agreement, unpredictability until $n - f$ authorities have contributed, and uniformity conditional on everything the adversary has seen before the value can be learned, including earlier coins.
The round's leader is a balanced map of the coin's value into the $n$ authorities, so it is uniform and unknown until $n - f$ shares exist. The coin can be instantiated with threshold signatures~\cite{cks}, one share per block of the rounds that need a coin.

\paragraph{DAG-based consensus.} \label{sec:model-dag}
In an uncertified DAG there is at most one block per honest validator per round, and each block references $n - f$ blocks of the previous round. A validator advances a round after receiving $n - f$ blocks from the previous round, and after a leader wait (in partial synchrony). Equivocation is tolerated rather than prevented: a vote is the first-seen block per author and round in a depth-first search.
Each round has a known number of leader slots. The direct rule decides each slot as a \emph{commit} or a \emph{skip} from the rounds of its wave, or leaves it \pundecided. Under the indirect rule, an \pundecided slot is decided by its \emph{anchor}, the earliest later slot that is committed or \pundecided, passing over skipped slots. The slot commits if and only if the anchor's causal history holds a certificate for it~\cite{mysticeti}. Output proceeds in slot order, so an \pundecided slot holds back every slot above it; each committed leader adds to the output, in a deterministic order, the blocks of its causal history not already in that of an earlier committed leader.

A rule of wavelength \wl decides the slot at round $r$ from the wave of rounds $r, \dots, r + \wl - 1$ alone: it commits when the blocks of that wave reference the leader block in the pattern the rule prescribes, skips when enough of them fail to, and otherwise leaves the slot \pundecided. \sysname never inspects that pattern: it calls the rule's own decision predicates and reads only the parameters of \Cref{tab:rules}, so any rule of this shape plugs in unchanged. As an example, in \mysticeti and \mahimahi round $r + \wl - 2$ is the vote round and round $r + \wl - 1$ the certify round, a certificate is a certify-round block referencing $n - f$ votes, a direct commit needs $n - f$ certificates and a direct skip $n - f$ non-votes, and the boost rounds fill the gap at $\wl \ge 4$.

\begin{table}[t]
    \centering
    \small
    \caption{The four base protocols and the parameters \sysname reads}
    \label{tab:rules}
    \scriptsize
    \begin{tabular*}{\textwidth}{@{\extracolsep{\fill}}lclcc@{}}
        \toprule
        Rule                & Wavelength (\wl) & Leader election & Leader wait  & Committee      \\
        \midrule
        \mysticeti          & $3$              & Deterministic   & \ymark       & $n \ge 3f + 1$ \\
        \mahimahi           & $\{4, 5\}$       & Randomized      & \nmark       & $n \ge 3f + 1$ \\
        \midrule
        BlueBottle (sync.)  & $2$              & Deterministic   & \ymark       & $n \ge 5f + 1$ \\
        BlueBottle (async.) & $3$              & Randomized      & \nmark       & $n \ge 5f + 1$ \\
        \bottomrule
    \end{tabular*}
\end{table}

\paragraph{What the two rules must provide.} \label{sec:interface}
\sysname requires five (typical) properties from the two rules it is instantiated with:

\begin{itemize}
    \clauseitem{clause:a1} Both rules run on the same committee, with quorums of size $n - f$, and read the same blocks; an honest author produces one block per round, referencing a quorum of the previous round and every block it holds that its previous block did not cover, and blocks carry whatever either rule needs.
    \clauseitem{clause:a2} A rule decides the slot at round $r$ from the wave of rounds $r, \dots, r + \wl - 1$, and a direct commit leaves its evidence in the causal history of every block at round $r + \wl$ or above.
    \clauseitem{clause:a3} A direct skip at a slot implies that no certificate for it ever forms.
    \clauseitem{clause:a4} After \pgst, with pacing, a \syncslot whose leader is honest and whose successor round carries a leader wait is directly committed.
    \clauseitem{clause:a5} Before the round's coin can be learned, the revealed history fixes a set of at least $pn$ candidate authors whose blocks have direct-commit evidence in every admissible continuation that populates the wave, the counting property; conditional on that history the leader is uniform over all authors, so a direct commit has probability at least $p$ whatever the adversary learned from earlier coins; and each candidate's evidence is checkable on a committed window.
\end{itemize}

The wave family of \Cref{tab:rules} provides \crefrange{clause:a1}{clause:a3} by construction: evidence accumulates in a wave, quorum intersection carries a direct commit's evidence into every block one wave later, and the commit and skip thresholds intersect, so a skipped slot can never be certified. A pair must add one clause each, \cref{clause:a4} for the partially synchronous rule and \cref{clause:a5} for the asynchronous one, which is why no single protocol satisfies the interface alone. Each clause is discharged for the two pairs we instantiate (\Cref{app:full-proofs}).

\paragraph{Problem statement.} \label{sec:model-problem}
\sysname composes a rule \rsync, of wavelength \wsync and with a known leader, and a rule \rasync, of wavelength \wasync and with a coin-elected leader, into one protocol that inherits the latency of \rsync under periods of synchrony and the liveness of \rasync under asynchrony. Both rules are used through the interface above. \sysname implements atomic broadcast.

\begin{definition}[Atomic broadcast]\label{def:bab}
    Atomic broadcast satisfies:
    \begin{itemize}
        \item \textbf{Agreement:} If one honest validator commits a block, all honest validators eventually commit it.
        \item \textbf{Integrity:} Each honest validator commits a block at most once, and only if it was proposed by its author.
        \item \textbf{Validity:} If an honest validator proposes a block, all honest validators eventually commit it (with probability $1$).
        \item \textbf{Total order:} If an honest validator commits a block $b$ before a block $b'$, no honest validator commits $b'$ before $b$.
    \end{itemize}
\end{definition}

\section{Interleaving Two Commit Rules}
\label{sec:protocol}\label{sec:overview}

\sysname adjusts the period \period at run time. For clarity, we first describe the protocol at a fixed \period in this section, then describe the full adaptive protocol in \Cref{sec:adaptive}. The reader may keep \mysticeti and \mahimahi in mind throughout, to make the exposition concrete.

\subsection{Slots by Round Number}
\label{sec:interleave-schedule}

The wavelength function is $\wlof{r} = \wasync$ if $r \bmod \period = 0$, and \wsync otherwise, where \period is the period in force at round $r$ (\Cref{sec:adaptive} indexes it by interval, as $\period_{j(r)}$, once it adapts). Each slot is therefore decided by exactly one rule, fixed by its round number, and this function is the only thing \sysname adds to the decision rule. Blocks are identical in every round, aside from a coin share, and the \pdag never pauses; the wavelength dictates only which rounds hold a slot's evidence. \Cref{fig:running} is the running example of \Cref{sec:protocol,sec:adaptive}.

\begin{figure}[tb]
    \centering
    \input{figures/running}
    \caption{The two rules deciding one another's slots, on one \pdag at $\period = 4$ ($n = 4$, $\wsync = 3$, $\wasync = 5$). Squares are \syncslots and circles \asyncslots, and the brackets mark the wave each rule reads, three rounds for \rsync and five for \rasync; filled is commit, white is undecided. Slots $8$, $10$ and $11$ are left undecided because their waves reach above the drawn rounds. The direct rule leaves slot $4$ undecided: only one block of round $8$ certifies its coin leader, short of the three a direct commit needs. Slot $9$, the first slot at its anchor floor $4 + \wasync$, commits directly under \rsync, and its causal history holds that certificate (the arrow), so it decides slot $4$.}
    \label{fig:running}
\end{figure}

A \syncslot's leader is known in advance from a deterministic schedule, such as a round-robin or reputation-based schedule~\cite{shoal,hammerhead}. An \asyncslot's leader is revealed by the coin only once its votes are fixed, at round $r + \wasync - 1$.
Validators wait, up to a bounded timeout, for a \syncslot's leader block before proposing at the round above it and, in \mysticeti, for a quorum of votes for it before proposing one round higher. Nobody waits for the hidden leader of an \asyncslot; the canary rounds of \Cref{sec:adaptive-probes} are the one exception. \Cref{clause:a4} rests on these waits, which \sysname inherits unchanged; a \syncslot immediately below an \asyncslot lacks the second one, so it may be decided by its anchor instead of directly, and \cref{clause:a4} is stated for \syncslots whose successor round carries a leader wait, which the canary rounds restore even at $\period = 1$.

\subsection{Deciding Slots: the Anchor Floor}
\label{sec:interleave-safety}

Each slot is decided by its own rule over its own wave (\Cref{alg:decide}), and this interpretation of the \pdag, which considers every slot and produces the ledger, is the \outputreading; \Cref{sec:adaptive-control} adds a second reading of the same blocks later. The anchor of the slot at round $r$ is searched from round $r + \wlof{r}$, a floor set by the wavelength of the \emph{undecided slot}, not by that of the anchor. A direct commit leaves certificates at round $r + \wlof{r} - 1$, so every block from round $r + \wlof{r}$ on holds one in its causal history. An anchor closer than that, such as an \asyncslot anchored at the \syncslot right above it, cannot see those certificates and would skip a slot that another validator committed. \Cref{fig:running} shows this: the direct rule leaves the \asyncslot at round $4$ undecided, and its anchor search starts at round $4 + \wasync$, where the \syncslot at round $9$ has committed. \Cref{app:full-proofs} discusses and illustrates (\Cref{fig:anchor}) the choice of floor in further detail.

The consequence is handover (\Cref{cor:handover}, \Cref{app:full-proofs}): the indirect step of either rule completes a direct commit of the other that it never witnessed. The arrow in \Cref{fig:running} marks this: the \syncslot's commit decides a slot the asynchronous rule left undecided. This is the whole cross-rule argument, and the reason a period change never has to pause the protocol to finish pending slots.

\subsection{What a Fixed Period Cannot Do}
\label{sec:interleave-limits}

In a healthy network, an \asyncslot costs $\wasync - \wsync$ extra rounds once every \period rounds, and its successor waits at most $\max(0, \wasync - \wsync - 1)$ rounds for causal ordering. These delays never compound and are negligible when the network is synchronous and \period is large. \Cref{app:variant} discusses a variant that avoids even this, at the price of a longer proof.

Under asynchrony, every \syncslot could be left \pundecided. The \asyncslots still commit, because their leader is hidden, but nothing they commit is output: slots are output in round order, and the anchor search of an \pundecided \syncslot stops at the next \pundecided one. No fixed $\period > 1$ is therefore live under asynchrony, while $\period = 1$ is live but slow when the network is healthy. Consequently, the period must be adjusted dynamically, which is the subject of \Cref{sec:adaptive}.

\begin{algorithm}[t]
    \caption{Decision with a slot-dependent wavelength}
    \label{alg:decide}
    \begin{algorithmic}[1]
        \Procedure{\pwaveof}{$r$}
        \State \Return $\wasync$ \textbf{if} $r \bmod \period = 0$ \textbf{else} $\wsync$ \Comment{\period in force for round $r$}
        \EndProcedure
        \Statex
        \Procedure{\ptrycommit}{$\pdag$}
        \State \textbf{for} each slot $s$ at round $r$, from the highest undecided downward: \Comment{as \mysticeti}
        \State \quad $\wl \gets$ \Call{\pwaveof}{$r$}
        \State \quad $v \gets$ \Call{\ptrydirectdecide}{$s$, $\wl$} \Comment{skip on $n{-}f$ blames or commit on $n{-}f$ certificates}
        \State \quad \textbf{if} $v = \pundecided$: $v \gets$ \Call{\ptryindirectdecide}{$s$, $\wl$}
        \State \Return the maximal decided prefix, in round order
        \EndProcedure
        \Statex
        \Procedure{\ptryindirectdecide}{$s$, $\wl$}
        \State $a \gets$ \Call{\pfindanchor}{$s$, $\wl$} \Comment{earliest commit-or-undecided slot at round $\ge r + \wl$}
        \State \textbf{if} $a$ is undecided: \Return $\pundecided$
        \State \textbf{if} a certificate for $s$'s leader block is in $a$'s causal history: \Return commit
        \State \textbf{else}: \Return skip
        \EndProcedure
    \end{algorithmic}
\end{algorithm}

\section{Adapting the Period}
\label{sec:adaptive}

The period selects between the two rules from what the committed \pdag shows, rather than from a guess about the network, which the adversary controls. \Cref{sec:adaptive-loop} states the loop, and the three subsections after it supply what the loop needs: evidence of both rules in the \pdag (\Cref{sec:adaptive-probes}), agreement on that evidence even when the ledger is stuck (\Cref{sec:adaptive-control}), and a score that turns it into a period (\Cref{sec:adaptive-replay}).

\begin{algorithm}[t]
    \caption{The adaptation loop, abstract form (runs at every validator)}
    \label{alg:loop}
    \begin{algorithmic}[1]
        \State $\pinterval$, $\pmaxperiod$, $\pcanary$, $\epsilon$ \Comment{e.g., $128$ rounds, period $64$, canary $31$, hysteresis $2\%$}
        \State $\period \gets \pmaxperiod$ \Comment{state: the period in force}
        \Statex
        \Procedure{OnInterval}{$j$} \Comment{once the scan of interval $j$ ends}
        \State $A \gets$ the pivot: the first commit of the \controlreading in interval $j$ \Comment{\Cref{sec:adaptive-control}; none keeps \period}
        \State $W \gets$ causal history of $A$ over the preceding \pinterval rounds \Comment{the window: agreed data}
        \If{the ledger up to $A$ committed no slot in $W$} \Comment{failover on a stalled ledger}
        \State $\period \gets 1$
        \Else
        \State $\period^\star \gets \arg\min_{\period' \in K} \Call{\preplay}{$W$, $\period'$}$ \Comment{expected output delay under each candidate (\Cref{sec:adaptive-replay})}
        \State \textbf{if} $\period^\star$ beats \period by more than $\epsilon$: $\period \gets \period^\star$ \Comment{hysteresis}
        \EndIf
        \State \Return $\period$ \Comment{applies to the rounds of interval $j + 1$ and above}
        \EndProcedure
    \end{algorithmic}
\end{algorithm}

\subsection{The Adaptation Loop}
\label{sec:adaptive-loop}

Rounds are grouped into intervals of \pinterval rounds, each running under one period. In each interval a \emph{pivot} is chosen, the first commit of the \controlreading, a second reading of the same blocks that applies the asynchronous rule to the rounds that carry a coin and whose verdicts never enter the ledger (\Cref{sec:adaptive-control}). The causal history of that pivot over the last \pinterval rounds is the \emph{window}, and the period of the next interval is the candidate whose replay of the window scores best (\Cref{sec:adaptive-replay}). \Cref{alg:loop} states the loop. As a failover, if the ledger up to the pivot committed nothing in the window, the next period is $1$ outright: the replay ranks periods, while the failover guarantees that an output which has stopped advancing always gets the rule that makes it advance again (and greatly simplifies the liveness proof).

The new period is a deterministic function of the pivot's window and of round numbers, and it applies only from the next interval on. By induction over intervals, every honest validator computes the same sequence of periods, for any deterministic update rule (\Cref{thm:agreement-period}). No message and no vote is involved.

The remaining parameters and guards are of no consequence to the argument and are deferred to \Cref{app:full-proofs}: the gating rule, under which a slot is evaluated once its interval's period is known, the one-interval lag, the bounds on \pinterval and \pmaxperiod, the warm-up interval, the hysteresis, and the candidate set $\{1, 2, 4, \dots, \pmaxperiod\}$.

\subsection{Seeing Both Rules: Two-Way Probes}
\label{sec:adaptive-probes}

A window shows only what the running rule did. Without a probe, at $\period = 1$ nothing would wait for a known leader, so the partially synchronous rule would look dead forever and the period could never climb back; and at $\period = \infty$ there would be no coin at all, so nothing would be live when the network turns, which is why $\infty$ is not a candidate period.
Each regime---the synchronous one, at a large \period, and the asynchronous one, at $\period = 1$---therefore keeps a thin probe of the other rule alive, at a fixed cost. There are two dials, one per direction.

\para{From \rsync to \rasync} In the synchronous regime, the \asyncslot every \period rounds (with $\period \le \pmaxperiod$) is the probe of the asynchronous rule. It carries a coin, it commits under asynchrony, and it keeps the \controlreading moving, so the period can fall when the network turns.

\para{From \rasync to \rsync} In the asynchronous regime, every round $r$ with $r \bmod \pcanary = 0$ is a \emph{canary}: validators keep the leader wait for the known leader of that round even on an \asyncslot. A canary that gathers a certificate is evidence that the partially synchronous rule would work again, so the period can climb back. The cost is one leader wait every \pcanary rounds (although wall-time waits are meaningless in asynchrony), with \pcanary odd and hence coprime to every candidate period, so that every candidate is probed.\footnote{A \pcanary sharing a factor with some candidate would leave all its canaries on that candidate's \asyncslots, and the period could never climb to it.} The same wait restores the precondition of \cref{clause:a4} at $\period = 1$ (\Cref{sec:interleave-schedule}), so the canaries serve twice.

Neither regime ever loses sight of the other rule, and switching regimes is reading that evidence. The two probe rows of \Cref{fig:idea} show it.

\subsection{Agreeing on the Evidence: the Control Reading}
\label{sec:adaptive-control}

The pivot cannot come from the ledger. Under asynchrony the ledger stalls below the first \syncslot left \pundecided, so a pivot taken from it would never be committed and the period could never fall (\Cref{fig:readings}).
The \controlreading considers only its own slots, defined next, and searches anchors among them alone. Each carries a coin, so its leader is hidden and asynchrony cannot stall the reading, and its first commit in an interval is the pivot.

\begin{figure}[tp]
    \centering
    \input{figures/readings}
    \caption{One \pdag read twice ($n = 4$, $\wsync = 3$, $\wasync = 5$, $\period = 4$, an interval of $8$ rounds ending at the dotted line, every known leader's block reaching only $f + 1$ validators in time). Squares are \syncslots and circles \asyncslots, filled when the reading commits them, struck on a skip and outlined while \pundecided; the brackets mark a wave. \textbf{Top:} the \outputreading; slots $4$ and $8$ commit, but slots $2$ and $3$ wait at slot $6$ (arrows), so nothing is output. \textbf{Bottom:} the \controlreading of the same blocks considers the coin-carrying rounds alone and resolves anyway; its first commit, slot $4$, is the pivot, and an output that committed nothing sets $\period = 1$ above round $8$. Slots $4$ and $8$ belong to both readings and are committed directly in both.}
    \label{fig:readings}
\end{figure}

Its slots are the interval's \asyncslots, the rounds with $r \bmod \period = 0$, and, above the interval, the multiples of \pmaxperiod. The anchor of a slot near the end of an interval lies above it, where the period is precisely what this scan is computing, so $\period$ is no guide there; the multiples of \pmaxperiod are \asyncslots under every candidate period, hence defined and coin-carrying whatever the scan decides. Because this set is fixed by rule rather than by what a view holds, the reading is agreed by the same lemmas as the ledger, with direct verdicts identical in both readings.

\sysname opens one coin every \period rounds, and the \controlreading opens none of its own: every candidate divides \pmaxperiod, so the multiples of \pmaxperiod that the reading uses above an interval boundary already carry one. They are what producers fall back on while an interval's period is still being computed and they cannot yet tell which of its rounds are \asyncslots; a share for such a round is contributed late.

The two readings overlap and need not agree: a round divisible by the period is a slot of both. Where either reading decides directly, both decide alike (\Cref{thm:agreement,cor:handover}, \Cref{app:full-proofs}); only their indirect verdicts can differ, because each searches anchors among its own slots. That divergence is harmless, since the two answer different questions: the \outputreading says what is committed, and the \controlreading says only where the period is computed from. A reading is a way to analyze an agreed subset of a \pdag whose full interpretation is stuck.

\subsection{Scoring Without a Coin: Counterfactual Replay}
\label{sec:adaptive-replay}

Because a commit rule only reads the \pdag, the window can be reinterpreted under any candidate period, with $\wl'(r) = \wasync$ if $r \bmod \period' = 0$ and \wsync otherwise. The score is the expected delay from a round to the output of its blocks, so the head-of-line blocking caused by \pundecided slots is already in it and nothing is added by hand. Each rule is scored from the evidence the window holds about it, which is a different kind of evidence for each.

\para{The asynchronous rule, without a coin} Its only random input is a uniform leader, so its expected latency on the window is the plain average over the $n$ possible leaders, each decided from the certificates the window holds. The slot at round $r$ commits directly with probability $c_r / n$, where $c_r$ is the number of the $n$ candidate leaders that the window shows directly committed, and no coin is reconstructed.

\para{The partially synchronous rule, from the canaries} On an \asyncslot only the canary rounds waited for the known leader, so the replay takes its evidence from the canary rounds alone. Their success rate stands in for the candidate's other \syncslots, and the replay is exact when the window holds no probe.

The adversary cannot make the asynchronous rule look slow: by the counting property, $c_r / n$ has a floor under any scheduling (\Cref{lem:replay-starvation}; at least $1/3$ for the $3f + 1$ pair at $\wasync = 5$). It can serve the public canaries and starve the other known leaders: if the output then stops, the failover of \Cref{sec:adaptive-loop} takes over, and otherwise the output keeps advancing. Behaving well in one window and badly in the next is a regime change, measured by the next window.

The replay makes two approximations, both deterministic and identical for every candidate. First, where the rule would decide a slot from the causal history of the anchor that the candidate would have used, the replay asks only whether a certificate for the slot is anywhere in the window; the window is the pivot's own causal history, so it stands in for the anchor's. Second, a slot that a candidate never commits within the window counts as committing at the window's top round, so the rounds it holds back are deferred there rather than left undefined. The cost is $O(|K| \cdot |W|)$ for $|K|$ candidates over a window of $|W|$ blocks, and \Cref{alg:update} in \Cref{app:full-proofs} gives the procedure in full.

\section{Correctness}
\label{sec:proofs}

\sysname implements atomic broadcast (\Cref{def:bab}) for any pair of rules satisfying the interface of \Cref{sec:interface}. Full statements and proofs are in \Cref{app:full-proofs}, machine-checked in Lean~4 (\Cref{app:lean}).\footnote{\leanlink}

\paragraph{Safety.} A direct commit leaves a certificate in every block from round $r + \wlof{r}$ on, and the anchor search starts there, so an indirect decision by either rule agrees with a direct commit by the other. Two indirect decisions agree because validators find the same anchor. This is the one cross-rule argument; everything else is inherited from the base protocols (\Cref{thm:agreement}, \Cref{cor:handover,cor:order-integrity}).

\paragraph{Liveness.} After \pgst, \syncslots with an honest leader and a leader wait on their successor round commit directly by \cref{clause:a4}, and \asyncslots do not get in the way (\Cref{thm:liveness-ps}). Under asynchrony, the \controlreading keeps resolving, because every slot on it has a hidden leader. The failover then forces $\period = 1$, and at $\period = 1$ a run of \wasync direct commits, which the counting property makes occur with probability $1$, decides every slot below it, so the output resumes (\Cref{thm:liveness-async}).

\paragraph{Agreement on the period.} The period is a deterministic function of the pivot's window, and the pivot is agreed because the \controlreading is. Induction over intervals gives the same sequence of periods at every honest validator, for any deterministic update rule (\Cref{thm:agreement-period}). At $\period = 1$ and $\period = \infty$, \sysname's verdicts are exactly \mahimahi's and \mysticeti's (\Cref{thm:conservativity}).

\section{Implementation and Evaluation}
\label{sec:evaluation}\label{sec:implementation}

Our evaluation quantifies what deciding slots by round number buys over each pure protocol on the same DAG, under network conditions that favor one or the other. Benchmarking BFT protocols under actual Byzantine behavior is an open problem~\cite{twins}; worst-case guarantees are established by the formal proofs of \Cref{sec:proofs}, while this evaluation measures latency under specific network conditions. We make the following claims:
\begin{itemize}
    \claimitem{claim:good} In a healthy network, \sysname matches the latency of the partially synchronous protocol.
    \claimitem{claim:async} Under network conditions that stall the partially synchronous protocol, \sysname matches the latency of the asynchronous one.
    \claimitem{claim:faults} Under benign asynchrony, \sysname matches the latency of the best of the two protocols.
    \claimitem{claim:adaptive} \sysname adapts promptly to the network: it switches to the best protocol when conditions change and back when they lift.
    \claimitem{claim:generic} \Cref{claim:good,claim:async,claim:faults,claim:adaptive} hold for both classes of target protocols, at $n \ge 3f+1$ and at $n \ge 5f+1$.
\end{itemize}

\subsection{Implementation}
\label{sec:eval-implementation}

We instantiate \sysname on two pairs of protocols: \mysticeti~\cite{mysticeti} with \mahimahi~\cite{mahimahi}, at $n \ge 3f+1$, and BlueBottle's partially synchronous variant with its asynchronous variant~\cite{bluebottle}, at $n \ge 5f+1$. We implement both in Rust on top of the codebase of \mysticeti, \mahimahi, and BlueBottle~\cite{mysticeti-code} written by their respective authors; public leaders use a round-robin schedule.\footnote{\codelink} \sysname adds a per-round wavelength schedule read by the decision rule and the anchor search, restricts the leader wait to \syncslots and canary rounds, and implements the period update by counterfactual replay of \Cref{alg:update}. The replay module is about $770$ lines; integrating it into the committer and the protocol table takes about $700$ more. The block format remains untouched, and the DAG layer changes only in its round pacing, which reads the period to select the leader wait, and in its retention horizon: \sysname adds no protocol messages, no cryptography beyond the coin the asynchronous protocol already carries, and no storage access. It runs purely in memory, and its memory is bounded by the interval: the replay reads the blocks of the last \pinterval rounds.

\subsection{Experimental Setup}
\label{sec:eval-setup}

The network conditions under test must be identical across protocols, switch on and off at scripted times, and remain reproducible. This is not possible in a cloud deployment, where asynchrony cannot be injected on demand and never repeats identically. We therefore use a deterministic discrete-event simulator with seeded runs, virtual time, and the real consensus code. We use a committee of $n = 50$ validators ($f = 16$ for the $3f+1$ pair, $f = 9$ for the $5f+1$ pair) in a full mesh; \Cref{app:evaluation-details} repeats the runs at $n = 10$. Per-link latency is uniform in $25$--$50$\,ms. The leader timeout is $100$\,ms.

Each timeline runs a healthy network for $30$\,s, applies a specific network condition until $150$\,s, and restores the healthy network for a final $60$\,s. Plateau figures are taken over the windows between $60$ and $150$\,s. We test six conditions: (i) a small leader delay of $30$\,ms, below the timeout; (ii) a large leader delay of $125$\,ms, above it; (iii) a permanent crash of $f$ validators; (iv) a partial random delay, where each message is delayed with probability $0.3$ by a duration drawn uniformly at random in $100$--$150$\,ms; (v) a full random delay, the same with probability $1$; and (vi) a high jitter, an exponentially distributed delay of mean $75$\,ms capped at $400$\,ms. The leader delays target the public round-robin schedule and are blind to the coin; the random delays instantiate the random asynchronous model of Danezis et al.~\cite{random-network}.

We compare the two pure protocols of each pair with adaptive \sysname, and report end-to-end latency from submission to commit, as the mean per $5$\,s window and the median over seven seeded runs. \Cref{app:evaluation-details} lists every parameter of \sysname and of the simulator, and the per-panel numbers.

\begin{figure}[t]
    \centering
    \includegraphics[width=\linewidth]{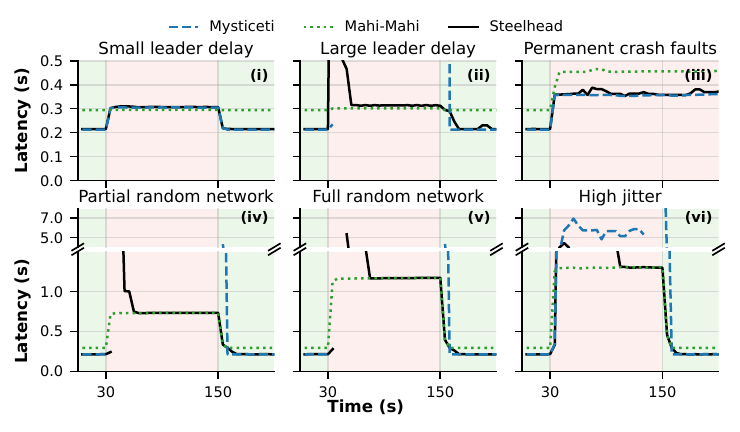}
    \caption{Latency over time of \mysticeti, \mahimahi, and adaptive \sysname ($3f+1$ pair, $n = 50$) under six network conditions applied from $30$\,s to $150$\,s (shaded). The bottom row breaks its axis: \mysticeti runs off the chart under large leader delay and full random delay.}
    \label{fig:weather-mm-50}
\end{figure}

\subsection{Healthy Network}
\label{sec:eval-healthy}

In the first $30$\,s and the last $60$\,s of every panel in \Cref{fig:weather-mm-50}, most legibly in (i), \sysname commits at $215$\,ms, \mysticeti at $213$\,ms, and \mahimahi at $294$\,ms. \sysname is within $1\%$ of \mysticeti and $27\%$ below \mahimahi. The period sits at its maximum, so one slot in $64$ operates as an \asyncslot and pays the two extra rounds described in \Cref{sec:protocol}, while everything else runs the \mysticeti rule. The counterfactual replay never lowers the period in a healthy window. This confirms \Cref{claim:good}.

\subsection{Degraded Network and Recovery}
\label{sec:eval-degraded}

\mysticeti stalls in panels (ii), (iv), and (v) of \Cref{fig:weather-mm-50}. In (ii), every known leader's block arrives after the timeout, so every slot is directly skipped and nothing commits. In (iv) and (v) it commits sporadically, by the mechanism of \Cref{app:commit-prob}, and under the jitter of (vi) it degrades to a $5.7$\,s plateau. \mahimahi is barely affected, at $301$, $731$, $1167$, and $1300$\,ms.

\sysname descends to period $1$ within $10$--$20$\,s of the onset in (ii), $15$\,s in (iv), $20$\,s in (v), and $20$--$45$\,s in (vi). Until the period drops, \sysname behaves as \mysticeti, causing the latency spike at the start of the phase. It then sits on the \mahimahi line: $313$ vs $301$\,ms in (ii), $732$ vs $731$\,ms in (iv), and $1168$ vs $1167$\,ms in (v). Panel (vi) takes longer to settle, at $1473$ vs $1300$\,ms over the plateau window: under jitter the period passes through $2$ for some $40$\,s before it reaches $1$, and the last $25$\,s of the condition run at $1301$\,ms. This confirms \Cref{claim:async}.

In (i) \mysticeti survives: all three protocols are within $5\%$ ($305$, $295$, $308$\,ms). \sysname stays at period $64$ but for at most one one-interval drop. In (iii), \sysname records $363$\,ms vs \mysticeti $356$\,ms vs \mahimahi $455$\,ms, a $2\%$ cost. Probes landing on a crashed round-robin leader read as starved slots, so the period briefly dips and climbs back every few intervals. \sysname can naturally be composed with reputation-based leader-election~\cite{shoal,hammerhead} to remove crashed leaders from the schedule and with it this effect (\Cref{sec:adaptive-probes}). This confirms \Cref{claim:faults}.

\sysname switches in both directions within a few intervals. In every panel it is back at period $64$ within $10$\,s of the condition lifting, one interval at the healthy round rate; latency follows the period each time (\Cref{app:evaluation-details}). This confirms \Cref{claim:adaptive}.

\subsection{The \texorpdfstring{$5f+1$}{5f+1} Pair}
\label{sec:eval-generic}

We evaluate the $5f+1$ pair using the same code and parameters, with BlueBottle's two variants as the pure protocols (\Cref{fig:weather-bb-50}). In a healthy network, \sysname commits at $169$\,ms, BlueBottle-PS at $168$\,ms, and BlueBottle-Async at $212$\,ms.

BlueBottle-PS stalls only in panel (ii) of \Cref{fig:weather-bb-50}, where \sysname drops to period $1$ within $10$\,s and matches BlueBottle-Async ($229$ vs $220$\,ms). Under the network-wide conditions (iv) to (vi) it survives but no longer ties its asynchronous variant at this committee size, running $48\%$, $41\%$ and $9\%$ slower; \sysname follows the faster variant within $1\%$ ($705$ vs $702$, $837$ vs $835$, and $944$ vs $935$\,ms). The link probability of \Cref{app:commit-prob} explains the gap: at $n = 50$ a validator with minimum-quorum references still references the leader with probability $0.82$, but every commit now needs $41$ votes, and the one-layer rule pays for it under random delays. In (iii) \sysname follows BlueBottle-PS within $2\%$ ($221$ vs $217$\,ms), and in (i) it follows it exactly ($238$\,ms), which leaves it $11\%$ above the asynchronous variant, the faster of the two in that panel. At $n = 10$ (\Cref{app:evaluation-details}) the picture holds with one change: BlueBottle-PS ties its asynchronous variant under the network-wide conditions, and \sysname sits between them. This confirms \Cref{claim:generic}.

\begin{figure}[t]
    \centering
    \includegraphics[width=\linewidth]{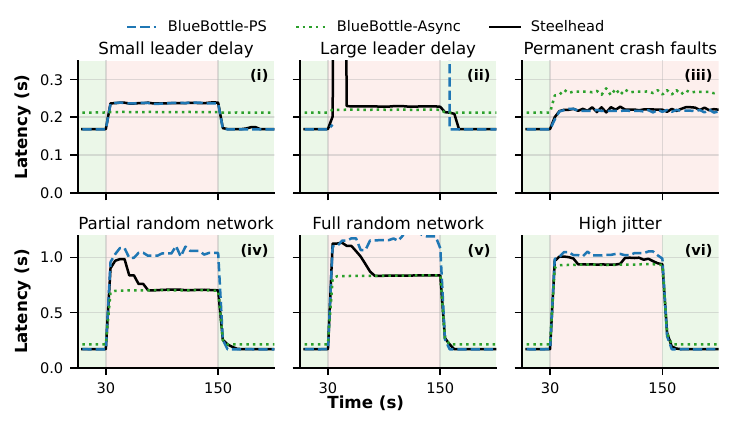}
    \caption{Latency over time of BlueBottle's partially synchronous variant (BlueBottle-PS), its asynchronous variant (BlueBottle-Async), and adaptive \sysname ($5f+1$ pair, $n = 50$) under six network conditions applied from $30$\,s to $150$\,s (shaded).}
    \label{fig:weather-bb-50}
\end{figure}
\section{Related Work}
\label{sec:related}

\begin{table}[t]
    \centering
    \caption{Dual-mode designs: what triggers a switch, and the dominant price of having one.}
    \label{tab:dualmode}
    \scriptsize
    \begin{tabular*}{\textwidth}{@{\extracolsep{\fill}}lll@{}}
        \toprule
        Design                     & Mode-change trigger                   & Dominant cost                 \\
        \midrule
        Ditto~\cite{jolteon-ditto} & quorum of timeouts        & timeouts; MVBA per switch            \\
        BDT~\cite{bdt}             & quorum of timeouts        & timeouts; binary agreement per switch \\
        Abraxas~\cite{abraxas}     & fast path lags slow chain & $O(n^2B)$ bits per block, always \\
        ParBFT1~\cite{parbft}      & first path to finish      & a VABA every slot             \\
        Ipotane~\cite{ipotane}     & first path to finish      & an extra round per slot       \\
        TockOwl+~\cite{tockowl}, ParBFT2~\cite{parbft} & hedging delay & slow-track messages when hedged \\
        Bullshark~\cite{bullshark} & quorum of timeouts       & timeouts; a coin every wave                     \\
        Icarus~\cite{icarus}       & backlog threshold         & alignment ($40$--$60\%$ when switching) \\
        \midrule
        \sysname (this work)       & round number              & a coin every \period rounds   \\
        \bottomrule
    \end{tabular*}
\end{table}

\Cref{tab:dualmode} compares these designs. Ditto~\cite{jolteon-ditto}, the Bolt-Dumbo Transformer~\cite{bdt} and Bullshark~\cite{bullshark} leave the partially synchronous path once enough parties have reported several timeouts, typically a quorum of parties timing out on multiple successive leaders. Sizing that count is the known difficulty of these designs~\cite{parbft}: too few flips a benign but poorly connected network, and too many stalls the partially synchronous path before the fallback starts. The flip is their dominant source of latency when the network degrades. The way back is not symmetric: these designs return to the fast path once the fallback commits regularly, or after a fixed number of rounds. \sysname has no timeout count to wait or size, because the round number alone says which rule decides a slot, and \period adapts from the committed \pdag. The flip itself is also paid for: Ditto runs an MVBA and the Bolt-Dumbo Transformer a binary agreement, both on the critical path, at every switch. Bullshark instead opens a coin in every wave, so every party computes and verifies a coin share for every leader whether or not the protocol runs in asynchrony. \sysname only needs a coin once every \period rounds, with $\period = 64$ or $128$ in typical deployments.

A second family avoids the timeout by never abandoning either path, and pays for the second path instead. Abraxas~\cite{abraxas} runs a slow protocol at all times alongside a fast one and uses it as the detector: parties fall back when a slow-chain block goes $\lambda$ blocks without confirmation certificates. The $O(n^2B)$ bits per block are paid in a healthy network too. ParBFT1~\cite{parbft} runs an optimistic and a pessimistic path concurrently and takes the first to finish, resolved by an asynchronous binary agreement per slot; its pessimistic path, a validated asynchronous agreement on every slot, costs a quadratic number of messages even under synchrony. Ipotane~\cite{ipotane} does the same at the cost of one extra round per slot. TockOwl+~\cite{tockowl} and ParBFT2~\cite{parbft} start their slow track after a hedging delay rather than immediately, and pay the slow track's messages only on the slots they hedge. Icarus~\cite{icarus} removes the second path instead: the chains of all parties take turns as the single optimistic path, and a switch is triggered once another chain accumulates a backlog of uncommitted blocks. Parties observe that backlog locally and may switch at different heights, so a Byzantine agreement runs on the critical path at every switch to align them, which accounts for 40--60\% of a switch. \sysname runs no second path and wastes no block: one slot in \period pays $\wasync - \wsync$ extra rounds, and nothing else.


\ifpublish
    \section*{Acknowledgements}
This work is partially funded by Mysten Labs.
We thank George Danezis for insightful discussions and for reviewing the Lean~4 formalization of \sysname (\Cref{app:lean}).

\fi

\bibliographystyle{splncs04}
\bibliography{references}

\appendix
\crefalias{section}{appendix}
\crefalias{subsection}{appendix}

\section{Variant Without Asynchronous Slots in the Ledger}
\label{app:variant}

While $\period > 1$, the \outputreading could apply \rsync to every round so that no \asyncslot ever enters the ledger. The rounds that carry a coin would be read as \asyncslots by the \controlreading alone. The failover of \Cref{sec:adaptive-loop} would still drop the period to $1$, and at $\period = 1$ the \outputreading would be \rasync on every round as before.

The liveness argument of \Cref{thm:liveness-async} uses only the \controlreading and the failover; the \asyncslots of the \outputreading play no part in it. Under asynchrony nothing they commit is output anyway (\Cref{sec:interleave-limits}), since the anchor search of an \pundecided \syncslot below stops at the next \pundecided one. The variant would therefore save what those slots cost in a healthy network: the $\wasync - \wsync$ rounds of one slot in \period, and the one round its successor waits for causal ordering, which our evaluation measures at about $1\%$ of latency at $\period = 64$ (\Cref{sec:eval-healthy}).

We keep the protocol as it is for three reasons. The \outputreading would no longer be a single rule with a slot-dependent wavelength, so the conservativity of \Cref{thm:conservativity} (\Cref{app:full-proofs}) would become a statement about a reading rather than about what the protocol outputs. The two readings would no longer consider the same slots, as the same round would have a known leader in one and a coin leader in the other, so the agreement argument could no longer reuse \Cref{cor:handover} (\Cref{app:full-proofs}) and would have to be made once per reading. Nothing would be saved beyond that latency, since the coin is opened for the \controlreading either way. We prefer the smaller proof to saving a percent of latency, particularly because \period is large in any deployment.

\section{Full Proofs}
\label{app:full-proofs}

This appendix states every result of \Cref{sec:proofs} and proves it. Each proof names the clause of \Cref{sec:interface} at the step that relies on it; every argument that needs neither \cref{clause:a4} nor \cref{clause:a5} holds for any pair of rules of the wave family, at any wavelength function. \Cref{thm:agreement} also holds beyond the wave family, for any family of rules that satisfy \cref{clause:a2,clause:a3} and the uniqueness of \Cref{lem:cert-unique}, and that agree on the tie-break and on the number of links the indirect step tries (\Cref{app:lean-model}): composed so that each slot is decided by the rule of its kind, they form one rule with the same properties, and \sysname is that composite for $\{\rsync, \rasync\}$. The Lean development of \Cref{app:lean} machine-checks these statements, and the arguments here follow the machine-checked ones.

Correctness rests on the interface of \Cref{sec:model}: a schedule of wavelengths $\wl : \mathbb{N} \to \{\wsync, \wasync\}$ with $\wl(r) \ge 2$, and two rules satisfying \crefrange{clause:a1}{clause:a5}. Statements apply to local views of one universe of blocks, and the properties established are those of \Cref{def:bab}. The two lemmas below rely only on \cref{clause:a1}, that is, on a shared committee with quorums of $n - f$ and one block per honest author and round, and on two rules of \Cref{sec:model-dag}: every block references $n - f$ blocks of the round below it, and a block votes for the first block it sees from the leader; they discharge \cref{clause:a3} (\Cref{lem:cert-unique}) and \cref{clause:a2} (\Cref{lem:intersection}) for both pairs; for the $3f + 1$ pair, \Cref{lem:cert-unique} is \mahimahi's certificate lemma read at the slot's own wavelength. They read a rule through two parameters. Its \emph{support} for a candidate of the slot at round $r$ is the set of authors whose block at the decision round $r + \wlof{r} - 1$ carries the rule's evidence for it: for the $3f + 1$ pair a certificate, a block referencing $n - f$ votes for the candidate from round $r + \wlof{r} - 2$, at every $\wl \ge 3$; for BlueBottle's pair a vote for the candidate, cast directly at $r + 1$ by the synchronous variant and through the causal history at $r + 2$ by the asynchronous one. Its \emph{indirect threshold} is the number of supporters an anchor's causal history must hold to commit the slot indirectly, one for the $3f + 1$ pair and $n - 3f$ for BlueBottle's. Under every rule a direct commit needs $n - f$ supporters, and a direct skip $n - f$ \emph{blames}, blocks of one round of the wave that vote for no candidate, the vote round for the $3f + 1$ pair and the decision round for BlueBottle's. \Cref{tab:discharge}, at the end of this appendix, maps each clause to each base protocol.

\begin{lemma}[Support uniqueness]\label{lem:cert-unique}
    For the $3f + 1$ pair, at most one candidate of a slot is ever certified, and none if the slot is directly skipped. For BlueBottle's pair, a candidate with $n - f$ supporters leaves every rival below $n - 3f$ supporters, and a directly skipped slot leaves every candidate below $n - 3f$.
\end{lemma}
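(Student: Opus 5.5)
Both parts reduce to quorum intersection combined with the fact that an honest author's block votes for at most one candidate, namely the first block of the leader it sees in its depth-first search (\cref{clause:a1}, \Cref{sec:model-dag}). Two quorums of size $n - f$ intersect in at least $n - 2f$ authors, of whom at least $n - 3f$ are honest. I treat the two pairs separately, since the statements differ in what "support" means.

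\emph{The $3f + 1$ pair.} Fix the slot at round $r$ with wavelength $\wl = \wlof{r}$, so the vote round is $r + \wl - 2$. Suppose two distinct candidates $b, b'$ are each certified. Each certificate is a block at round $r + \wl - 1$ that references $n - f$ vote-round blocks voting for its candidate. These two sets intersect in at least $n - 2f \ge f + 1$ authors when $n \ge 3f + 1$, so some honest author $h$ lies in both. Since $h$ produces one block at the vote round, that block votes for both $b$ and $b'$. This contradicts the first-seen rule, because the vote of a block is a deterministic function of that block's own causal history and is therefore the same whichever certificate references it. This last point is where I would be careful: the "first-seen" choice must be a function of the voting block, not of the certificate that reads it. I expect to justify this directly from the definition of a vote in \Cref{sec:model-dag}.

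For the skip claim, a direct skip means $n - f$ blames at the vote round, that is, vote-round blocks that vote for no candidate. A certificate needs $n - f$ votes at the same round. The two sets share an honest author, whose single block would have to both vote and not vote. The argument is independent of the value of $\wl$, since it only uses rounds $r + \wl - 2$ and $r + \wl - 1$.

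\emph{BlueBottle's pair.} Here support for a candidate means a vote at the decision round $r + \wl - 1$, which is a single round. Suppose $b$ has $n - f$ supporters and a rival $b'$ has $s$ supporters. Honest authors among the supporters of $b'$ cannot support $b$, so the number of honest supporters of $b'$ is at most the number of honest authors outside the supporters of $b$, which is at most $f$. Adding at most $f$ Byzantine supporters gives $s \le 2f$. I then need $2f < n - 3f$, which is exactly $n \ge 5f + 1$.

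The skip case is similar. The $n - f$ blames are decision-round blocks that vote for nothing, and blames and supporters lie in the same round. So honest supporters of any candidate are confined to the at most $f$ authors outside the blamers, giving at most $2f < n - 3f$ supporters.

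The main obstacle is making the single-vote-per-honest-block fact precise in the asynchronous BlueBottle variant, where the vote is cast through causal history at $r + 2$ rather than by direct reference. I would handle this by noting that the vote is still a deterministic function of the block's own causal history, and hence unique per honest author and round.
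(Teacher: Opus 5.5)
Your proposal is correct and follows essentially the same route as the paper. Both rest on three facts: an honest author has one block per round, and that block votes for at most one candidate. Two sets of authors at the same round with conflicting votes (or a vote and a blame) therefore share only Byzantine authors. The count is taken at the vote round for the $3f+1$ pair and at the decision round for BlueBottle's. The only difference is presentation: the paper packages both cases into the single bound ``if one set has $n-f$ authors, the other has at most $2f$,'' while you phrase the $3f+1$ case as a contradiction through an honest author in the intersection, which is the same quorum-intersection argument.
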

\begin{proof}
    An honest author has one block per round (\cref{clause:a1}), and that block votes for at most one candidate of a slot, the first block it sees from the leader (\Cref{sec:model-dag}). Two sets of authors whose blocks at one round vote for two different candidates, or one for a candidate and one for none, therefore share only Byzantine authors and together number at most $n + f$; if one set has $n - f$ authors, the other has at most $2f$.

    \emph{The $3f + 1$ pair.} Count at the vote round. A certified candidate has $n - f$ votes there, since a certificate references $n - f$ votes, and a directly skipped slot has $n - f$ blames there. A second certified candidate, or a certified candidate of a skipped slot, would need $n - f$ votes against the $2f$ that are left, and $n - f > 2f$ at $n \ge 3f + 1$.

    \emph{BlueBottle's pair.} Count at the decision round, where the supporters are the voters themselves. Against a candidate's $n - f$ supporters every rival keeps at most $2f$ supporters, and against the $n - f$ blames of a direct skip every candidate does; $2f < n - 3f$ because $n \ge 5f + 1$. \qed
\end{proof}

\begin{lemma}[Quorum intersection across the wave]\label{lem:intersection}
    If a candidate of the slot at round $r$ has $n - f$ supporters, the causal history of every block at round $r + \wlof{r}$ or above reaches the slot's indirect threshold for it: it holds a certificate for the candidate under the $3f + 1$ pair and the supporting blocks of at least $n - 3f$ of its supporters under BlueBottle's.
\end{lemma}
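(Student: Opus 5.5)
The plan is to let a block $b$ at round $r' \ge r + \wlof{r}$ \emph{inherit} support from the decision round $d = r + \wlof{r} - 1$. Let $S$ be the set of $n - f$ supporters at round $d$. First I reduce to $r' = r + \wlof{r} = d + 1$. Causal history is transitive. By \cref{clause:a1}, a block above round $d + 1$ references a quorum of the round below it, and inductively reaches some block at round $d + 1$ (any honest or Byzantine block references $n - f \ge 1$ blocks of the previous round). So every block at round $r'$ has in its causal history at least one block at round $d + 1$, and whatever that block's history holds, $b$'s history holds too. It therefore suffices to treat a block $b$ at round $d + 1$.

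At round $d + 1$, $b$ references $n - f$ blocks of round $d$, by the rule of \Cref{sec:model-dag}. These blocks come from $n - f$ distinct authors. Intersect this author set with $S$: two sets of size $n - f$ among $n$ authors share at least $n - 2f$ authors. This is the quorum-intersection count, and it is the only arithmetic step.

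There is one subtlety I expect to be the main obstacle: equivocation. A Byzantine author in $S$ may have two blocks at round $d$, one supporting and one not, and $b$ may reference the non-supporting one. I would handle it by counting only honest authors. Of the $n - 2f$ common authors, at least $n - 3f$ are honest. An honest author has exactly one block at round $d$ (\cref{clause:a1}), so the block $b$ references from that author is its supporting block.

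Now I split by pair.

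For BlueBottle's pair, $b$'s history holds the supporting blocks of at least $n - 3f$ supporters. This is exactly the claimed threshold, and $n - 3f \ge 2f + 1 > 0$ at $n \ge 5f + 1$.

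For the $3f + 1$ pair, the threshold is a single certificate. The count gives $n - 3f \ge 1$ honest supporters whose round-$d$ block is a certificate, meaning it references $n - f$ votes for the candidate. Since $b$ references that block, $b$'s history holds a certificate.

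The case $r' > d + 1$ then follows from the reduction above. This also discharges the evidence half of \cref{clause:a2}, with evidence read as "reaches the indirect threshold".
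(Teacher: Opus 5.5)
Your proposal is correct and follows the paper's proof step for step. You intersect the $n - f$ authors referenced at the decision round with the $n - f$ supporters to get $n - 2f$ common authors, and keep only the at least $n - 3f$ honest ones so that equivocating twins cannot break the count. You then read off one certificate (for the $3f+1$ pair) or $n - 3f$ supporting blocks (for BlueBottle's), and extend to higher rounds by induction on the round and transitivity of causal history. The only difference is cosmetic: you reduce to round $r + \wlof{r}$ first, whereas the paper does it last.
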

\begin{proof}
    A block $B$ at round $r + \wlof{r}$ references $n - f$ blocks of the decision round $r + \wlof{r} - 1$ (\Cref{sec:model-dag}). Their authors and the $n - f$ supporters both lie among the $n$ authors of one round, so at least $n - 2f$ authors are in both sets, and at least $n - 3f$ of those are honest. An honest author has one block at the round (\cref{clause:a1}), its supporting block, so $B$ references it outright; for a Byzantine author in the intersection the block $B$ references may be a twin carrying no support, which is why the count stops at the honest ones. At $n \ge 3f + 1$ this is at least one certificate, and under BlueBottle's pair it is the $n - 3f$ supporters its indirect step asks. A block above round $r + \wlof{r}$ references a block of the round below it (\Cref{sec:model-dag}), so by induction on its round it reaches a block at round $r + \wlof{r}$, and causal history is transitive. \qed
\end{proof}

\begin{theorem}[Agreement]\label{thm:agreement}
    For every slot, no two honest validators hold conflicting verdicts, regardless of their local views and whether they decided directly or indirectly. At the adaptive schedule this holds for validators that have each derived the period of every interval up to the highest round of the (finite) universe of blocks.
\end{theorem}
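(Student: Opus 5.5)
The proof goes by strong induction on slots, taken from the highest round downward. This mirrors the order in which \ptrycommit decides them and the order in which an indirect verdict depends on verdicts of higher slots. First I would fix the wavelength schedule. For a fixed period it is a function of the round number alone. For the adaptive schedule, the hypothesis says both validators have derived the period of every interval up to the top of the finite universe. I would take those periods to be identical, which is the induction over intervals of \Cref{thm:agreement-period} in the paper's own ordering. Then $\wlof{r}$ is the same at both validators, so they apply the same rule, the same wave, and the same anchor floor $r + \wlof{r}$ to every slot. From this point the argument never uses which rule a slot belongs to, only \cref{clause:a2,clause:a3}, \Cref{lem:cert-unique} and \Cref{lem:intersection}.

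\emph{Case 1: two direct verdicts.} Two direct commits select the same candidate by \Cref{lem:cert-unique}, since at most one candidate reaches $n - f$ supporters. A direct commit next to a direct skip is impossible: the $n - f$ blames of the skip leave fewer than $n - f$ supporters for any candidate, by the same lemma.

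\emph{Case 2: one direct, one indirect.} Say validator $A$ commits the slot at round $r$ directly, and validator $B$ decides it indirectly through an anchor $a$ at round $\ge r + \wlof{r}$. By \Cref{lem:intersection}, the causal history of $a$ reaches the indirect threshold for $A$'s candidate. So $B$ cannot skip, and by \Cref{lem:cert-unique} it cannot commit a rival, because no rival reaches the threshold. If instead $A$ skips directly, \Cref{lem:cert-unique} (discharging \cref{clause:a3}) says no candidate ever reaches the indirect threshold in any view. So $B$'s anchor finds nothing and $B$ skips too.

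\emph{Case 3: two indirect verdicts.} By the induction hypothesis, every slot above round $r$ receives the same verdict at $A$ and $B$ wherever both have decided it. Both validators search from the same floor and pass over the same skipped slots, so they stop at the same anchor slot with the same committed leader block. Causal history is fixed by hash references, so the two histories coincide and the certificate test gives the same answer.

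The main obstacle is Case 3 when the two views differ in extent. One validator may see an anchor slot as undecided while the other sees it as committed, or one may have decided a slot between the floor and the anchor that the other has not. I would handle this by proving the stronger invariant that mirrors the Lean statement: \emph{decided} verdicts never conflict, and an undecided verdict stops the search and returns \pundecided. The first anchor that both validators have decided is then the first non-skipped slot at or above the floor in both views. Any slot before it that one view skips cannot be committed in the other, by the induction hypothesis. So the two searches meet at the same slot, and that slot has the same leader block.

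A second delicate point in the adaptive case is anchors lying in a later interval, where the period is still being computed. This is exactly why the hypothesis requires periods derived up to the top of the universe. With that assumption the wavelength of every slot involved is fixed and agreed, and Case 3 goes through unchanged.
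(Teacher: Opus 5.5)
Your proof is correct and follows the paper's argument essentially step for step. It uses the same three cases: the direct/indirect case is settled by \Cref{lem:intersection,lem:cert-unique} at the floor $r + \wlof{r}$ with no induction hypothesis, the two-indirect case forces both searches onto the same committed anchor because a slot committed in one view cannot be skipped in the other, and the adaptive case is reduced to one schedule through \Cref{thm:agreement-period}; your downward induction on slots is interchangeable with the paper's induction on the derivation of one validator's verdict, since the universe is finite. The one point the paper states explicitly and you only gesture at is the absence of circularity: \Cref{thm:agreement-period} itself invokes the fixed-schedule form of this theorem at the \controlslots of a scan, so the fixed-schedule cases must be established before the adaptive extension, and your argument should say so rather than appeal to ``the paper's own ordering''.
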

\begin{proof}
    Fix a wavelength function \wl and the schedule it induces; the last paragraph extends the claim to validators running their own derived periods. Let $s$ be the slot at round $r$ and let two honest validators $u$ and $v$ hold verdicts for it. The proof is by induction on the derivation of $u$'s verdict: the indirect step derives a verdict for $s$ from verdicts for slots strictly above it, and the induction hypothesis is that each of those agrees with every verdict $v$ holds for the same slot.

    \emph{Two direct verdicts.} Two direct commits name the same block, and a direct commit and a direct skip cannot coexist, both by \Cref{lem:cert-unique}.

    \emph{One direct and one indirect verdict.} This case uses no induction hypothesis, so let $v$ be the validator that decides directly, and $u$ decide $s$ through its anchor $A$. If $v$ directly commits $s$, its candidate has $n - f$ supporters at the decision round $r + \wlof{r} - 1$. The anchor search starts at round $r + \wlof{r}$ (\Cref{sec:interleave-safety}), so $A$ lies at that round or above, whichever rule decides $A$ and whatever wavelength $A$'s own slot carries. By \Cref{lem:intersection} (\cref{clause:a2}) the causal history of $A$'s leader block reaches the slot's indirect threshold for $v$'s candidate, and by \Cref{lem:cert-unique} no rival does, so $u$ commits the same candidate. If instead $v$ directly skips $s$, \Cref{lem:cert-unique} (\cref{clause:a3}) leaves every candidate below the indirect threshold, so $u$'s anchor holds none at it and $u$ skips as well.

    \emph{Two indirect verdicts.} Both searches start at round $r + \wlof{r}$, a function of $r$ and of the schedule alone, and both anchors are commits, since an \pundecided anchor yields no verdict. A slot $u$'s search passes over is a skip in $u$'s derivation, which by the induction hypothesis $v$ does not commit, so $v$'s anchor does not lie below $u$'s; and $u$'s anchor is a commit in $u$'s derivation, which by the induction hypothesis $v$ does not skip, so $v$'s search does not pass over it. Both validators therefore land on the same slot $A$, where by the induction hypothesis they commit the same block. A block determines its own causal history, so both examine the same supporting blocks, apply the same tie-break, and reach the same verdict.

    \emph{At the adaptive schedule.} Let $u$ and $v$ each run their own derived periods, and let each have derived the period of every interval up to the highest round of the universe of blocks. By \Cref{thm:agreement-period} the two sequences of periods agree on those intervals, so every slot either derivation reads has the same wavelength at both, and the argument above applies on one schedule. There is no circularity: the proof of \Cref{thm:agreement-period} invokes the present theorem only at the \controlslots of one scan, whose schedule is fixed by the interval's period, the period bound and the interval boundary. \qed
\end{proof}

\begin{corollary}[Handover]\label{cor:handover}
    If some honest validator directly commits the slot at round $r$, every honest validator whose anchor for that slot is committed commits it too, whichever rule decides the anchor, and no honest validator ever skips it.
\end{corollary}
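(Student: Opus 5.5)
The corollary follows from \Cref{thm:agreement}, specialised to the case where one validator's verdict is a direct commit. Fix the slot $s$ at round $r$ and an honest validator $v$ that directly commits it to the candidate $b$. Then $b$ has $n - f$ supporters at the decision round $r + \wlof{r} - 1$. Let $u$ be any other honest validator. We split on how $u$ reaches a verdict for $s$, if it reaches one at all.

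First, suppose $u$'s anchor $A$ for $s$ is committed. The anchor search of \Cref{alg:decide} starts at round $r + \wlof{r}$, and this floor is fixed by the wavelength of $s$, not by that of $A$. So $A$'s leader block lies at round $r + \wlof{r}$ or above, whichever of \rsync and \rasync decides $A$ and whatever $\wlof{\cdot}$ assigns to $A$'s round.

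By \Cref{lem:intersection} (\cref{clause:a2}), the causal history of that block reaches the indirect threshold of $s$ for $b$. By \Cref{lem:cert-unique}, no rival candidate reaches it. Hence \ptryindirectdecide returns commit for $b$ at $u$. This is exactly the one-direct, one-indirect case of \Cref{thm:agreement}, and it proves the first claim. If instead $u$ also decides $s$ directly, the two-direct case of \Cref{thm:agreement} gives the same block.

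Second, no honest validator ever skips $s$. A direct skip would require $n - f$ blames in the wave of $s$, and \Cref{lem:cert-unique} (\cref{clause:a3}) rules out $n - f$ blames coexisting with a candidate that has $n - f$ supporters. An indirect skip would require a committed anchor, at round $r + \wlof{r}$ or above, whose causal history lacks the threshold for every candidate. The previous paragraph shows that any such anchor holds the threshold for $b$, so no indirect skip is possible either. The only remaining outcome is an \pundecided anchor, which yields no verdict at all.

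The one step that needs care is the adaptive schedule: $\wlof{r}$ must be the same value at $u$ and at $v$, so that $u$'s search floor really is $v$'s decision round plus one. I would discharge this exactly as the last paragraph of \Cref{thm:agreement} does. By \Cref{thm:agreement-period}, both validators derive the same period for the interval containing $r$, and hence for every round either derivation reads, so the argument runs on a single shared schedule. Everything else is a direct instantiation of the two lemmas.
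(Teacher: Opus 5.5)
Your proof is correct and follows the paper's route. The paper's proof is a one-line appeal to \Cref{thm:agreement}: a committed anchor guarantees a verdict, that verdict must agree with the direct commit, and the same theorem excludes a skip. You unfold the two-direct and one-direct/one-indirect cases of that theorem inline, together with its closing paragraph on the adaptive schedule, and the only refinement worth noting is that every step you use reads $\wlof{r}$ alone, so agreement of the period at $r$'s interval suffices and the wider claim about ``every round either derivation reads'' is more than the argument needs.
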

\begin{proof}
    An honest validator whose anchor for the slot is committed holds a verdict for the slot, its direct one or else the indirect step's, which returns a commit or a skip, and by \Cref{thm:agreement} that verdict agrees with the direct commit; the same theorem excludes a skip at every honest validator. \qed
\end{proof}

\para{Where the two rules meet}
The case of one direct and one indirect verdict in the proof of \Cref{thm:agreement}, whose commit half \Cref{cor:handover} states, is the only step where the two rules interact: an anchor committed by one rule completes a direct commit of the other. Two consequences are used later. A period change never has to pause the protocol to finish pending slots: the anchors of the next period decide the slots left \pundecided under the previous one. And the \outputreading and the \controlreading of \Cref{sec:adaptive-control} read a direct verdict identically, so their verdicts on a shared slot can differ only where both are indirect.

\para{Why the floor is the slot's own wavelength}
\Cref{lem:intersection} reads the wavelength of the slot, $\wlof{r}$, not that of the slot at the round of the block whose causal history it reads; this is what \cref{clause:a2} asks of a rule, and it sets the anchor floor at $r + \wlof{r}$. Neither $r + 1$ nor $r + \wl(\text{anchor})$ will do. Consider an \asyncslot $L$ at round $r$ with $\wasync = 5$: its certificates sit at round $r + 4$, and a validator holding $n - f$ of them commits $L$ directly. An anchor below round $r + \wasync - 1$ has its entire causal history below the round those certificates sit at, so it holds none of them and its indirect step skips $L$, contradicting the direct commit; an anchor at round $r + \wasync - 1$ itself is one block of that round, which need not be among the certifiers. Only from round $r + \wasync$ on does \Cref{lem:intersection} guarantee a certificate. A \syncslot $P$ at round $r + 1$ is the extreme case, its blocks referencing only round-$r$ blocks, but a \syncslot at round $r + \wsync$ is no better. \Cref{fig:anchor} draws both verdicts on one \pdag: the rule's floor commits the slot through the \syncslot at round $9$, while a floor read from the anchor's own wavelength lands on the \syncslot at round $7$ and skips it.

\begin{figure}[t]
    \centering
    {\centering\input{figures/anchor-rule}\par}
    \subcap{The floor of \Cref{sec:interleave-safety}, $4 + \wasync = 9$. The search lands on slot $9$, and the edge from its leader block to the certificate at round $8$ is why slot $4$ commits.}\label{fig:anchor-rule}
    {\centering\input{figures/anchor-low}\par}
    \subcap{A floor read from the anchor's own wavelength, $4 + \wsync = 7$. The search lands on slot $7$, whose causal history lies entirely below round $8$, so no edge reaches the certificate and slot $4$ is skipped.}\label{fig:anchor-low}
    \caption{One \pdag read with two anchor floors ($n = 4$, $\wsync = 3$, $\wasync = 5$, $\period = 4$; the \pdag of \Cref{fig:running}). Highlighted in the \pdag are the leader block of slot $4$ at round $4$, the one block that certifies it at round $8$, every route between the two, of which the three top edges are the round-$7$ votes that make that block a certificate, and the leader block of the anchor the search lands on. The routes funnel through the leader's own chain, since no other author's block at rounds $5$ or $6$ references it, which is why slot $4$ has exactly one certificate. The bracket marks the wave of slot $4$, and the arrow runs from the anchor to the slot it decides. In the strip below each panel, squares are \syncslots and circles \asyncslots, filled on a commit and struck on a skip. One view yields both verdicts. No validator commits slot $4$ directly here, but the low floor can skip a slot that another validator has committed directly, which the floor of the rule excludes (\Cref{lem:intersection}).}
    \label{fig:anchor}
\end{figure}

\begin{corollary}[Total order and integrity]\label{cor:order-integrity}
    Slots are output in round order, and each block is output at most once.
\end{corollary}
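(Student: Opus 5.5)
The plan is to derive both claims from the structure of \ptrycommit (\Cref{alg:decide}) together with \Cref{thm:agreement}, with no new cross-rule argument.

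\emph{Slots are output in round order.} This follows from the output procedure itself. \ptrycommit returns the maximal decided prefix in round order. Each committed leader then appends the blocks of its causal history not already in that of an earlier committed leader, in a deterministic order. Across successive invocations at one validator, I will show that the decided prefix only grows: a slot once decided keeps its verdict. For direct verdicts this holds because the evidence is monotone in the view. For indirect verdicts it follows from \Cref{thm:agreement} applied to the same validator at two views, since a later view is just another local view of the same universe. The output sequence is therefore an extension of every earlier one.

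\emph{Integrity.} The output is the concatenation, over committed leaders in slot order, of causal histories minus earlier ones. So each block appears at most once, and only blocks actually in the \pdag appear. Those blocks are signed by their authors (\cref{clause:a1}), hence proposed by them.

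\emph{Total order across validators.} This part needs agreement. Take two honest validators $u,v$. Their decided prefixes are both prefixes of the slot sequence, so one is a prefix of the other in length. On the common length, \Cref{thm:agreement} gives identical verdicts, including the same committed leader block at every committed slot. Both validators use the same wavelength at every slot there, since at the adaptive schedule the periods agree by \Cref{thm:agreement-period}. Because a block determines its causal history, the sequence of per-leader deltas is identical. Hence one output is a prefix of the other, so no pair $b,b'$ is ordered oppositely.

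\emph{Main obstacle.} The step I expect to be delicate is the monotonicity at a single validator when the period is adaptive. A slot can only be evaluated once its interval's period is known (the gating rule), so its wavelength, and hence its anchor floor, never changes after first evaluation. I would make that explicit before invoking \Cref{thm:agreement}, treating a validator's two views as two validators whose derived periods coincide on the relevant intervals.
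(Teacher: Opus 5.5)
Your proposal is correct and follows essentially the paper's proof. The output is the commit sequence over a settled prefix that only appends; \Cref{thm:agreement} makes two validators' outputs coincide as far as both reach; and at-most-once delivery comes from each committed leader contributing only the part of its causal history not already covered by an earlier one. One caveat on your extra single-validator monotonicity step, which the paper does not argue but takes as part of the output rule (decided slots are never re-evaluated): \Cref{thm:agreement} across two views only rules out a later verdict that conflicts with an earlier one, not that a decided slot stays decided under recomputation, and that persistence does hold but follows from replaying the earlier derivation in the larger view.
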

\begin{proof}
    \emph{Total order.} Output proceeds in slot order and stops at the first \pundecided slot (\Cref{sec:model-dag}), so what a validator has output is the sequence of commit verdicts over a settled prefix of slots, and settling further slots only appends. By \Cref{thm:agreement} two validators agree on the verdict of every slot both have settled, so their outputs agree as far as both reach and neither reorders what it already released.

    \emph{Integrity.} Both the direct and the indirect step commit only a candidate of the slot under decision, that is, a block of the slot's round authored by that slot's leader. \sysname places one slot at each round, so a committed leader block enters the output only at the slot of its own round, which carries a single verdict. Every other block enters with the first committed leader whose causal history holds it, and a later commit does not output it again. \qed
\end{proof}

\begin{theorem}[Liveness under partial synchrony]\label{thm:liveness-ps}
    After \pgst, with bounded leader timeouts, every \syncslot whose leader is honest and whose successor round carries a leader wait is decided as a commit by the direct rule within \wsync rounds, and every \asyncslot whose elected leader is honest (or crashed) is decided directly within \wasync rounds, and partial dissemination alone does not defer the decision of an \asyncslot. Any slot the direct rule leaves \pundecided, such as one whose leader equivocates, is decided by its anchor once some slot above its floor commits and every slot in between is decided.
\end{theorem}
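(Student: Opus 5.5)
The proof splits into three parts, one per sentence of the statement. Each part is discharged from the interface clauses and the lemmas already established.

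\emph{Synchronous slots.} Fix a \syncslot at round $r$ after \pgst, with an honest leader and a successor round that carries a leader wait. This case is exactly \cref{clause:a4}, so it needs no further argument. The wave is rounds $r,\dots,r+\wsync-1$ by \cref{clause:a2}, so the direct commit is available once round $r+\wsync-1$ is populated, that is, within \wsync rounds.

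One point needs checking: the interleaving must not break the pacing that \cref{clause:a4} assumes. By \cref{clause:a1}, blocks are identical in every round apart from a coin share, and the \pdag never pauses. An \asyncslot inside or next to the wave therefore only removes a leader wait on its own round, never on round $r+1$. The hypothesis on the successor round is exactly what excludes the one \syncslot immediately below an \asyncslot. At $\period=1$ the canary rounds supply that wait, as described in \Cref{sec:adaptive-probes}.

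\emph{Asynchronous slots.} For an \asyncslot whose coin-elected leader is honest, I use the counting property of \cref{clause:a5}. After \pgst, every honest block of round $r$ reaches all honest validators within $\Delta$. Every admissible continuation populating the wave then contains direct-commit evidence for the honest candidate. For the $3f+1$ pair this means $n-f$ certificates at round $r+\wasync-1$, because honest votes at round $r+\wasync-2$ reference the leader. The slot is therefore committed directly within \wasync rounds.

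If the elected leader crashed, there are at least $n-f$ honest non-votes at the vote round. This gives a direct skip within the same wave, and \cref{clause:a3} makes the skip consistent. Partial dissemination does not defer the decision, for two reasons. The leader is hidden until round $r+\wasync-1$, so no validator waited for it. And the counting property fixes, from the revealed history alone, the candidates whose evidence appears in every continuation.

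\emph{Undecided slots.} For a slot the direct rule leaves \pundecided, such as one with an equivocating leader, I follow \Cref{alg:decide}. \ptryindirectdecide searches from round $r+\wlof{r}$ for the earliest slot that is committed or \pundecided. Suppose some slot above that floor commits, and every slot in between is decided. Then the search passes over the skipped slots between them and reaches either a committed slot or the first slot that is not a skip. Either way that slot is a commit, and it is the anchor.

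The indirect step then returns a verdict. By \Cref{lem:intersection} and \Cref{lem:cert-unique}, together with \Cref{thm:agreement}, this verdict is the same at every honest validator.

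\emph{Main obstacle.} I expect the hardest step to be the asynchronous part under partial synchrony. It requires verifying, pair by pair, that \cref{clause:a5}'s evidence really forms for an honest leader without any leader wait, and that a crashed leader yields $n-f$ blames rather than leaving the slot \pundecided. I would first try to derive both facts from $\Delta$-bounded delivery and the referencing rule of \cref{clause:a1} for the $3f+1$ pair. I would then repeat the same counting with BlueBottle's thresholds at $n\ge 5f+1$.
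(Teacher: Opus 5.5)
Your synchronous part (it is \cref{clause:a4}) matches the paper. So does your treatment of an undecided slot: with every slot between the floor $r + \wlof{r}$ and a commit decided, the first non-skip slot the search meets is a commit, so the indirect step returns a verdict. The gap is in the \asyncslot part.

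The counting property of \cref{clause:a5} cannot show that an honestly led \asyncslot commits. It names a set of at least $pn$ candidates with direct-commit evidence, which is what lets a uniform coin succeed with probability $p$ under asynchrony. It says nothing about whether the particular honest leader the coin names lies in that set.

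Nor does $\Delta$-bounded delivery close the step by itself. Nobody waits for the leader of an \asyncslot, so an honest validator may close rounds $r + 1, \dots, r + \wasync - 2$ on $n - f$ other blocks before the leader's block reaches it. The vote round then falls short of $n - f$ votes.

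The missing ingredient is the post-\pgst reference condition of \Cref{app:lean-assumptions}. It is established by timed pacing after \pgst, and it is the form in which partial synchrony enters the argument: from some round on, every block of the reliable quorum references every such block of the round below. With it, a block held by a single honest validator lies in the causal history of every honest block two rounds later. The vote round $r + \wasync - 2$ is at least $r + 2$ when $\wasync \ge 4$, so every honest voter votes for the leader. Every honest block of the certify round then references all of those votes and is a certificate, which gives the $n - f$ certificates of a direct commit. For BlueBottle's asynchronous variant at $\wasync = 3$, the voters at $r + 2$ reach the leader block through their causal history.

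The same two-round spread is what proves that partial dissemination does not defer the decision. It suffices that one honest validator holds the leader block in time, and the leader may even be Byzantine provided it does not equivocate. Your two reasons, that nobody waited for a hidden leader and that the counting property fixes the candidates before the reveal, do not show that a leader block reaching only part of the committee is still directly decided within \wasync rounds. The crashed-leader case, with $n - f$ honest blames at the vote round and hence a direct skip, is fine as you have it.
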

\begin{proof}
    \emph{Synchronous slots.} After \pgst and under the pacing of \Cref{sec:interleave-schedule}, a \syncslot whose leader is honest and whose successor round carries a leader wait is directly committed within \wsync rounds. This is \cref{clause:a4}.

    \emph{Asynchronous slots with an honest or crashed leader.} After \pgst, under the reference condition of \Cref{app:lean-assumptions}, a block held by one honest validator is referenced, transitively, by every honest block two rounds later. An honest leader's block therefore lies in the causal history of every honest block from round $r + 2$ on, the honest voters vote for it, and every honest block at the certify round references all of them and so certifies it: the slot is directly committed within \wasync rounds, at $\wasync \ge 4$. BlueBottle's asynchronous variant needs no certify round: its voters two rounds up reach the leader's block through their causal history, so the slot commits at $\wasync = 3$. A crashed or silent leader collects $n - f$ blames at the vote round and is directly skipped. Partial dissemination therefore does not defer the decision: it suffices that one honest validator holds the leader block in time, and the leader may be Byzantine as long as it did not equivocate.

    \emph{The wait above an \pundecided slot.} What remains is a slot the direct rule leaves \pundecided, for instance one whose leader equivocates or a \syncslot whose successor round carries no leader wait, and its wait is set by the leader schedule rather than by the decision rule. Consider the \emph{floor chain} of the slot: hop from the slot to the first slot at or above its floor that the validator does not skip, then hop again from there. Crashed and silent leaders are skipped, so every landing is a slot whose leader did propose, and a landing the first two paragraphs cover, an honestly led \asyncslot or an honestly led \syncslot whose successor round carries a leader wait, commits directly. Going back down the chain, a landing whose own successor landing commits is decided by the indirect step and not skipped, hence commits in turn and anchors the landing below it. It is therefore enough that the chain reach a committed landing; if some slot above the floor commits and every slot in between is decided, the first landing is already a commit, which is the second claim. \qed
\end{proof}

\para{Round counts}
The wait above an \pundecided slot has a bound at the two ends of the dial. At $\period = \infty$, under a round-robin schedule with $\wsync\,f < n$, no two landings share a residue class of the rotation. Were two landings at one residue, they would bound a whole number of cycles; an honestly led round commits and so is never one of the skips a hop passes over, hence honest leaders could sit only in the $\wsync - 1$ rounds between a landing and its own floor, and counting the $f$ Byzantine residues against the cycles gives $n \le \wsync\,f$. The landings' leaders being distinct, the $b \le f$ Byzantine validators occupy at most $b$ of them and one of the first $b + 1$ landings is honestly led. An honestly led round lies within $f$ rounds above any floor, so a hop climbs at most $\wsync + f$ rounds, and every decision round the argument reads lies within $(b + 1)(\wsync + f)$ rounds of the slot.

At $\period = 1$ the wait is the first run of \wasync coin rounds naming directly committed candidates, since such a run decides every slot below it (the proof of \Cref{thm:liveness-async}). Cut the coin rounds into blocks of \wasync, and call a block good if every one of its coins names a directly committed candidate. Conditioning on the coins already revealed, \cref{clause:a5} gives each round probability at least $p$, so a block is good with probability at least $p^{\wasync}$, and the expected number of blocks up to and including the first good one is at most $1 / p^{\wasync}$. No bound per hop of the search holds under the coin: the coin that commits an anchor above a pending slot both skips that slot and moves the landing, so the landings are not independent draws, and a two-hop bound of $(b/n)^2$ fails on a concrete \pdag (\Cref{app:lean}). In between, the rotation leads only the \syncslots and the coin the rest, and the residue count needs slack in the committee that $n = 3f + 1$ does not have.

\begin{theorem}[Liveness under asynchrony]\label{thm:liveness-async}
    Assume $\pmaxperiod \le \pinterval$, $\wasync \le \pinterval$ and $\wsync \le \wasync$, that every round above $2\,\pinterval$ is populated (\cref{clause:a1}), and that the counting bound of \cref{clause:a5} holds at every round with $p > 0$. Under asynchrony, with probability $1$, above every interval some interval finds a pivot (\Cref{sec:adaptive-control}), and every slot is eventually decided. While a slot is \pundecided, every pivot at least two intervals above it hands the next interval $\period = 1$ through the failover.
\end{theorem}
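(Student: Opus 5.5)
I would prove the three claims in the order the statement implies: first that pivots keep appearing, then the failover claim, then eventual decision of every slot. Each step uses the previous one, and every probabilistic step reduces to a Borel--Cantelli style argument over runs of coin rounds, driven by \cref{clause:a5}.

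\emph{Pivots exist.} Fix an interval $j$. The \controlslots of its scan are the interval's \asyncslots (the multiples of its period) and, above it, the multiples of \pmaxperiod. The bound $\pmaxperiod \le \pinterval$ guarantees that every interval contains at least one such slot. All of these rounds carry a coin, and every round above $2\,\pinterval$ is populated, so \cref{clause:a5} applies at each of them. Conditioning on everything revealed before the coin, each of them is directly committed with probability at least $p$.

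I would then mirror the \emph{Round counts} argument. Cut the \controlslots above interval $j$ into consecutive groups long enough that one group's direct commits span a full wave, so that a slot's floor lies within the group. A group is good with probability at least $p^{m}$ for a fixed $m$, conditionally on the past. Hence a good group occurs with probability $1$.

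A direct commit is a commit in the \controlreading. Its first such commit in the relevant interval is the pivot, and it is agreed by \Cref{thm:agreement} applied to the fixed control schedule. To make the pivot the first commit \emph{in the interval}, I would show that a direct control commit exists inside the interval itself, not only above it. This holds because each interval holds at least one coin slot, so with probability $1$ infinitely many intervals have one. That gives the clause ``above every interval some interval finds a pivot.''

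\emph{Failover while a slot is pending.} Let slot $s$ at round $r$ be \pundecided in the \outputreading, and let a pivot $A$ lie in an interval at least two intervals above $s$. Output stops at the first \pundecided slot, which is at or below $s$. Because $\wasync \le \pinterval$, the window of \pinterval rounds below $A$ lies entirely above round $r$, and it lies above the floor of $s$ as well. So the ledger up to $A$ has committed nothing in $W$: any commit there would be output only after $s$ is settled. The failover branch of \Cref{alg:loop} therefore fires, and the next interval gets $\period = 1$.

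One subtlety needs care here. ``Committed'' in the test must mean output in the ledger prefix, not a direct verdict. I would state this reading explicitly and check it against \Cref{alg:loop}.

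\emph{Every slot is decided.} Suppose $s$ stays \pundecided forever. By the first two steps, infinitely many intervals above $s$ run at $\period = 1$, and in fact every interval from some point on does. Inside such intervals every round is an \asyncslot. By the same grouping argument, with probability $1$ a run of \wasync consecutive coin rounds occurs, all naming directly committed candidates.

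I would then show that such a run decides every slot below it. Take any \pundecided slot $t$ below the run whose floor $t + \wlof{t}$ falls at or before the run. Since $\wsync \le \wasync$, the floor is at most \wasync above $t$. Its anchor search therefore lands on a committed slot of the run, or on an earlier commit, and that slot returns a verdict. Applying this top-down within the run, every slot in the window resolves. The run then decides $s$, which contradicts the assumption.

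\emph{Main obstacle.} The delicate step is the period-transition interval. Slots near the top of a $\period > 1$ interval have anchors in the $\period = 1$ interval, and \syncslots left over from the previous period have floor $\wsync \le \wasync$. So I must check that the run's floor argument covers them; this is where $\wsync \le \wasync$ is used. I must also check that the conditional probability bound still holds across the adaptive schedule, whose period is itself adversarially influenced. I would handle the latter by noting that the schedule of interval $j+1$ is fixed before its coins are revealed, so \cref{clause:a5}'s conditional uniformity applies round by round.
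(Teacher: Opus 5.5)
\textbf{There is a genuine gap in your pivot step, and it carries into your last step: you never account for the gating rule.} A validator has no wavelength for interval $j+1$, hence no \controlslots and no verdicts there, until the scan of interval $j$ has closed. A scan closes only once every \controlslot of its interval up to the first commit is decided, because it waits at an \pundecided one.

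So ``a direct control commit lies inside interval $j'$'' does not make it the pivot of $j'$. An earlier \controlslot of $j'$ may stay \pundecided. At $\period = 1$, say, slot $r$ may be \pundecided with its anchor search from $r + \wasync$ stopping at another \pundecided slot, even though a slot just above that one commits directly. Worse, the scan of $j'$ may not exist at all, because some scan below it is still waiting. Your good groups ``above interval $j$'' settle at most scan $j$, and nothing ties them to the interval in which you place the direct commit.

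The missing idea is a single event that does both jobs. The paper takes blocks of $\wasync \cdot \pmaxperiod$ consecutive rounds, each opening an interval, and calls a block good if the coin names a directly committed candidate at every one of its rounds.
\begin{itemize}
\item A good block holds $\wasync$ consecutive multiples of $\pmaxperiod$, which are \controlslots of \emph{every} earlier scan. By the run argument every earlier scan therefore closes, and the period of the interval the block opens is derived.
\item That period is at most $\pmaxperiod \le \pinterval$, so the interval's first \controlslot lies among the block's first $\pmaxperiod$ rounds. It commits directly, and it is the pivot because nothing precedes it in the scan.
\end{itemize}
Your final step needs the same device. The run of $\wasync$ direct commits must fall in an interval whose period is derived. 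This is why the paper takes it from the first $\wasync$ rounds of a later good block; this, not the position of the window, is where $\wasync \le \pinterval$ is used. The paper then bounds the failure of two good blocks among $M$ by $2(1 - p^{\wasync \cdot \pmaxperiod})^{\lfloor M/2 \rfloor}$.

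\textbf{Two smaller points.} First, your fix for the adaptive schedule rests on a false premise. The period of interval $j+1$ need not be fixed before that interval's coins are revealed. Before it closes, the scan of interval $j$ may read multiples of $\pmaxperiod$ inside interval $j+1$, whose shares are contributed late. The paper sidesteps this by reading the coin as a draw at every round, revealed only at \asyncslots, and defining goodness on all rounds of a block. The bound $p^{\wasync \cdot \pmaxperiod}$ then holds whatever schedule the earlier draws induce.

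Second, the window need not lie above the floor of $s$ (take $s$ at the last round of its interval). That claim is unused, since the failover test needs only the last output commit below $s$ and the pivot more than $\pinterval$ rounds above it.

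The rest matches the paper: the failover arithmetic, period $1$ persisting, and the top-down argument that the highest \pundecided slot below the run has its floor at or below the run's last slot because $\wsync \le \wasync$. For the persistence of period $1$, add that an interval without a pivot keeps its period.
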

\begin{proof}
    \emph{Good blocks.} Fix a slot $s$ and take blocks of $\wasync \cdot \pmaxperiod$ consecutive rounds, each opening an interval: the first at the second interval above that of $s$, and one more every $\lceil \wasync \cdot \pmaxperiod / \pinterval \rceil$-th interval, so that no two overlap. Read the coin as a draw at every round, of which the protocol reveals those at its \asyncslots. A block is \emph{good} if the coin names a directly committed candidate at each of its rounds. Conditioning on the coins already revealed, \cref{clause:a5} gives each round probability at least $p$, so a block is good with probability at least $p^{\wasync \cdot \pmaxperiod}$ whatever the earlier blocks did.

    \emph{Some interval finds a pivot.} A good block holds \wasync consecutive multiples of \pmaxperiod, which every scan of an earlier interval reads as consecutive \controlslots above its boundary (\Cref{sec:adaptive-control}). They commit directly and, as the run of the next paragraph does at the \controlreading's fixed wavelength \wasync, decide every \controlslot below them, so every earlier scan closes and the period of the interval the block opens is derived. That period is at most $\pmaxperiod \le \pinterval$, so the interval's first \controlslot lies among the block's first \pmaxperiod rounds; it commits directly and is the interval's pivot. Good blocks occur with probability $1$, by the bound at the end of this proof.

    \emph{Every slot is eventually decided.} Let a good block give an interval at least two past that of $s$ its pivot, and let a later good block, which settles every scan below its own interval, supply in its first \wasync rounds, which lie in that interval since $\wasync \le \pinterval$, a run of coins naming directly committed candidates. If $s$ is still \pundecided, the agreed output's last commit lies below $s$ and the pivot more than \pinterval rounds above $s$, so the failover hands the next interval $\period = 1$, and every later interval up to the run keeps it while $s$ stays \pundecided. The run's slots are then \asyncslots that commit directly, and since no slot's wavelength exceeds \wasync, the run decides every slot below it: the highest \pundecided slot below the run has its floor at or below the run's last slot, the first slot at or above that floor that is not a skip is a commit, and the indirect step returns a verdict. Two good blocks, one in each half of $M$ blocks, occur except with probability at most $2(1 - p^{\wasync \cdot \pmaxperiod})^{\lfloor M/2 \rfloor}$, which vanishes. The slots being countably many, with probability $1$ every slot is eventually decided, with the run's commits above it, and by \Cref{cor:order-integrity} the ledger grows with them. \qed
\end{proof}

\para{Period updates while output is stalled}
A direct commit reaches a lower slot only through a decided stretch: from the lower slot's floor up to the first commit, every slot must be decided, because the anchor search stops at an \pundecided one. Suppose no \syncslot is ever certified or directly skipped, as an adversary can arrange under asynchrony (\Cref{sec:interleave-limits}), so that no \syncslot commits. At $\period \ge \wsync$ the stretch above a slot at a round $r$ with $r \bmod \period = \period - 1$ then ends at an \asyncslot above round $r + \period$ and so holds the slot at round $r + \period$, which is in the same position; no such slot is therefore ever decided, since each needs the one a period up decided first. Take $\period = 4$, $\wsync = 3$, and $\wasync = 5$. The \syncslot at round $3$ searches from round $6$, and since no \syncslot commits, only a committed \asyncslot at round $8$ or above can decide it; the search reaches round $8$ only if slots $6$ and $7$ are skipped (in \Cref{fig:readings} it waits at slot $6$), and slot $7$ is in the same position one period up, and so on. Meanwhile the \asyncslots at $8$ and $12$ commit directly and are never output. Because a directly committed \asyncslot is read identically in both readings, the two readings differ only when an \asyncslot is not directly committed. Suppose the direct rule leaves the \asyncslot at round $4$ \pundecided. In the \outputreading, its anchor search begins at round $9$ and, since no \syncslot commits, only a committed \asyncslot at round $12$ or above can decide it, which needs slot $11$ skipped; slot $11$ is never decided, so slot $4$ stays \pundecided. In the \controlreading, whose slots above round $8$ are the multiples of $\pmaxperiod = 4$ as in \Cref{fig:readings}, the search visits only \controlslots and reaches slot $12$, whose causal history decides it. The pivot of an interval, the slot whose causal history determines the next period, must be identical for validators that observed a direct commit and those that did not. Only the \controlreading lets the latter resolve it, which is why the period update evaluates \rasync at the \controlslots rather than at the output.

The counting lemma underpinning the asynchronous bound requires $\wasync \ge 4$: at $\wasync = 4$ it guarantees only one directly committed candidate, so a commit has probability at least $1/n$, and at $\wasync = 5$ the stronger $n - f - b$ bound holds (both derived under ``Instantiating the theorems'' below). The $3f + 1$ pair inherits this trade-off from \mahimahi; BlueBottle's pair does not, its counting lemma holding at $\wasync = 3$. The bound also shows what the sparse \controlslots cost. An interval contains $c = \pinterval / \period$ \controlslots, with at least two slots when $\period = \pmaxperiod$. Since an interval retains its period when all of its \controlslots are skipped, counting direct commits alone, a scan finds its pivot within an expected $1 / (1 - (1 - p)^{c})$ intervals, which is approximately $1.8$ intervals for $p = 1/3$ and $c = 2$, but approximately $n/2$ intervals for the $p = 1/n$ of $\wasync = 4$. A deployment at $\wasync = 4$ should therefore keep the ratio $\pinterval / \pmaxperiod$ well above two.

\begin{theorem}[Agreement of the period]\label{thm:agreement-period}
    Under any deterministic update rule, two honest validators that derive the period of interval $j$ derive the same one.
\end{theorem}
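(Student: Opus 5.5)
The plan is induction on the interval index $j$, with the hypothesis that any two honest validators that derive the periods of intervals $0, \dots, j-1$ derive the same ones. For the base case, the period of the first interval (and of the warm-up interval) is the fixed initial value $\pmaxperiod$ of \Cref{alg:loop}. It depends on no view, so every honest validator derives the same value.

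For the inductive step, take two honest validators $u$ and $v$ that both derive the period of interval $j+1$. Each derivation requires the validator to have closed the scan of interval $j$, that is, to have found a pivot $A$ for it. I first fix the set of \controlslots that the scan of interval $j$ reads. These are the rounds of interval $j$ with $r \bmod \period_j = 0$, together with the multiples of $\pmaxperiod$ above the interval. By the induction hypothesis, $\period_j$ is the same at $u$ and $v$, and $\pmaxperiod$ and the interval boundary are constants. Both validators therefore decide the same sequence of slots, each with the fixed wavelength $\wasync$ of the \controlreading. This step is the one that breaks the apparent circularity with \Cref{thm:agreement}. The schedule the scan reads never depends on the period being computed, because it consists only of the interval's own period (agreed by induction) and the period-independent multiples of $\pmaxperiod$, which are \asyncslots under every candidate because every candidate divides $\pmaxperiod$.

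On this fixed schedule I apply \Cref{thm:agreement} to the \controlreading. The \controlreading is one rule with a slot-dependent wavelength over an agreed slot set, and direct verdicts coincide by \Cref{lem:cert-unique,lem:intersection}. So $u$ and $v$ hold no conflicting verdicts on any \controlslot. The pivot is the first committed \controlslot of interval $j$, and a validator only declares it once every earlier \controlslot of the interval is decided. Both validators therefore see the same verdicts on the same prefix, and they name the same slot and the same committed leader block $A$. Here I would also invoke the tie-break agreement inside \Cref{thm:agreement}, which gives that the committed candidate is the same block and not merely the same slot.

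It remains to show that the inputs to the update rule coincide. A block determines its causal history through hash references, so the window $W$, defined as the causal history of $A$ restricted to the preceding $\pinterval$ rounds, is identical at $u$ and $v$. The failover test asks whether the ledger up to $A$ committed a slot in $W$. This reads output-reading verdicts, which agree by \Cref{thm:agreement} and \Cref{cor:order-integrity} under the agreed periods up to interval $j$ (gating ensures those slots are evaluated only with known periods). I expect this step to be the main obstacle. I would argue that the verdicts in question are computable from $A$'s causal history together with the agreed periods, so both validators evaluate the same predicate; if this fails, I would fall back on defining the failover relative to $A$'s history only. Finally, \preplay, the $\arg\min$ over $K$ and the hysteresis comparison against the agreed current $\period_j$ are deterministic functions of $(W, \period_j)$, so any deterministic update rule returns the same $\period_{j+1}$. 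The case in which no pivot is found keeps $\period_j$, which is also agreed. This closes the induction.
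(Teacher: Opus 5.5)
Your skeleton matches the paper's: induction over intervals, the \controlslots of the scan fixed by the agreed $\period_j$, \pmaxperiod and the boundary, \Cref{thm:agreement} applied to the \controlreading, the pivot agreed, and the window fixed by the pivot block. The gap is at the step you flag yourself, and neither of your two fallbacks closes it.

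The failover reads $\ell$, the round of the agreed output's last commit. In \Cref{alg:update} that output is persistent state: each pivot extends it \emph{from its last slot} over the pivot's causal history, and $\ell$ keeps its old value when an extension commits nothing. Your induction hypothesis carries only the periods, so it says nothing about where each validator's agreed output resumes. Citing \Cref{thm:agreement} and \Cref{cor:order-integrity} does not supply this. They rule out conflicting verdicts, but not different settled extents, so two local ledgers can disagree on whether any slot of $W$ has been committed. Your first fallback, that the relevant verdicts are computable from $A$'s causal history and the agreed periods, does not hold for the cumulative output. That output depends on the causal histories of all earlier pivots, and in an uncertified \pdag $A$ need not reference them. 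A run from scratch on $A$'s history therefore need not settle the same prefix, nor give the same $\ell$. Your second fallback changes the protocol instead of proving the theorem for it.

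The repair is the paper's: strengthen the hypothesis to the whole configuration state, namely $\period_j$, the agreed output's next slot, and the round of its last committed leader. In the step, the agreed output is the deterministic output rule run on $A$'s causal history from the agreed cursor. Every verdict it reads lies at or below $\mathrm{round}(A)$, inside intervals whose periods are agreed, so the new cursor and $\ell$ agree. The test $\ell + \pinterval < \mathrm{round}(A)$ then compares agreed quantities, and the replay and hysteresis read $(W, \period_j)$ as you say. In the no-pivot case the whole state, not just the period, carries over. With that change your argument goes through.
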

\begin{proof}
    By induction on configurations. Configuration $j$ is interval $j$, the rounds $j\,\pinterval + 1$ through $(j + 1)\,\pinterval$ (\Cref{sec:adaptive}), and its state is the period $\period_j$ that the scan of interval $j - 1$ fixed, the agreed output's next slot, and the round of its last committed leader. Interval $0$ runs at the initial period. Assume that any two honest validators holding a state for interval $j$ hold the same one.

    \emph{The scan's slots agree.} The \controlslots of the scan of interval $j$ are the multiples of $\period_j$ up to the boundary $(j + 1)\,\pinterval$ and the multiples of \pmaxperiod above it (\Cref{sec:adaptive-control}). They are a function of $\period_j$, of \pmaxperiod and of the boundary, all agreed, and of nothing that a view holds.

    \emph{The scan's verdicts agree.} The \controlreading is \rasync on the sub-schedule those slots name, at the fixed wavelength \wasync. \Cref{thm:agreement} applies to it unchanged, its proof reading only \Cref{lem:cert-unique,lem:intersection} and a common anchor search, none of which mentions the output's slots. Each scan has its own \controlslots, and verdicts of different scans are never compared.

    \emph{The pivot agrees.} The pivot is the first \controlslot of interval $j$ whose \controlverdict is a commit, every earlier one being a skip. Both components are agreed by the previous step, and so is the case of no pivot, in which the state carries over unchanged.

    \emph{The next state agrees.} The agreed output is the output rule applied to the pivot's causal history, continued from the state's cursor. A block determines its own causal history, so both validators read the same blocks and, the output rule being deterministic, derive the same verdicts, and they move the cursor and the last commit alike. Every verdict that advance reads lies at or below the pivot's round, where the periods are agreed, so the induction closes at each interval. The failover then compares the round of the agreed output's last commit with the pivot's round, both agreed, and the update rule is a deterministic function of the pivot block, the window it determines, and the current period. \qed
\end{proof}

\begin{theorem}[Conservativity]\label{thm:conservativity}
    With a constant rule that never changes \period, setting $\period = 1$ yields exactly the verdicts of \rasync and setting $\period = \infty$ exactly those of \rsync (\mahimahi's and \mysticeti's, in the $3f+1$ pair), on a given \pdag.
\end{theorem}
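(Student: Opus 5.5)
The plan is to show that, once \period is held constant, the composite decision procedure of \Cref{alg:decide} collapses syntactically to the decision procedure of a single base rule. Conservativity is then an identity of procedures on a fixed \pdag, not a probabilistic or agreement statement. I would compare the two procedures component by component: the wavelength function, the leader assignment, the direct rule, the anchor search, and the output order. I would check that each component coincides with the base protocol's at the two extremes.

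\emph{Case $\period = 1$.} For every $r$, $r \bmod 1 = 0$, so $\wlof{r} = \wasync$ at every round. Every slot is therefore an \asyncslot with a coin-elected leader, exactly as in \mahimahi. \ptrydirectdecide is called with \wasync and is \rasync's own predicate, since \sysname never inspects the pattern and only calls the rule's decision predicates (\Cref{sec:model-dag}). \pfindanchor starts at $r + \wasync$, which is \rasync's own anchor floor, and it uses the same certificate test in the anchor's causal history. The slot order and the output order are also those of \rasync.

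One point needs care here: the canary rounds. They add a leader wait that alters pacing, and so alters which \pdag is built. But the statement is ``on a given \pdag'', and the decision rule never reads the wait, so verdicts are unaffected. I would make this explicit by noting that the verdict functions take only the \pdag as input.

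\emph{Case $\period = \infty$.} I would read $\period = \infty$ as the limit schedule in which no round satisfies the \asyncslot condition. Then $\wlof{r} = \wsync$ everywhere, every leader comes from the deterministic schedule, and the same component-by-component comparison gives exactly \rsync's verdicts. I would also note that with a constant rule the failover and the replay are inert by hypothesis, so the \controlreading never feeds back into the output. That reading is in any case empty of coins at $\period = \infty$, but that does not matter, because the theorem concerns the output's verdicts only.

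The main step, and the expected obstacle, is to justify that the recursion structure is identical and not just the individual predicates. Concretely, \ptrycommit iterates from the highest undecided slot downward and resolves indirect verdicts through anchors. I would argue by induction on slots, from the top of the \pdag downward, that the \sysname verdict equals the base rule's verdict. The direct case is equal by the predicate identity. The indirect case uses the inductive hypothesis on the slots above to show that the anchor search passes over the same skips and lands on the same anchor, since the floor $r + \wlof{r}$ coincides. The same causal-history test then gives the same verdict, and this also covers the tie-break and the number of links tried (\Cref{app:lean-model}), which are shared by assumption.

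For the $3f+1$ instantiation, I would finally note that \mysticeti's and \mahimahi's rules are exactly the \rsync and \rasync used, with $\wsync = 3$ and $\wasync \in \{4,5\}$ (\Cref{tab:rules}).
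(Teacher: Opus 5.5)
Your proposal is correct and follows the paper's proof. A constant period makes $\wlof{r}$ constant ($\wasync$ at $\period = 1$, $\wsync$ at $\period = \infty$), so every slot uses one rule's direct and indirect steps and the anchor floor $r + \wlof{r}$ collapses to the base rule's constant floor, whence the verdicts agree slot by slot; your top-down induction spells out the paper's one-line claim that the anchor searches coincide, and your observation that canary waits change only which \pdag is built, not the verdicts on a given one, is consistent with the paper.
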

\begin{proof}
    At $\period = 1$ every round satisfies $r \bmod \period = 0$, so $\wlof{r} = \wasync$ everywhere, every slot is an \asyncslot with a coin-elected leader, and both the direct and the indirect step are \rasync's at \wasync. At $\period = \infty$ no round satisfies it, so $\wlof{r} = \wsync$ everywhere and every slot is a \syncslot. In both cases the anchor floor $r + \wlof{r}$ collapses to the constant floor of the base rule, so the anchor searches coincide as well, and the verdicts agree slot by slot. \qed
\end{proof}

\begin{lemma}[The replay's direct-commit weight]\label{lem:replay-starvation}
    For the $3f + 1$ pair at $\wasync = 5$ and for BlueBottle's pair, let $r$ be a round of a window that holds the blocks of a quorum, honest for BlueBottle's pair, at the rounds the counting lemma reads for $r$: the vote and decision rounds of $r$'s wave for \mahimahi, the two rounds above $r$ for BlueBottle's asynchronous variant. Then $c_r / n$ is at least the counting fraction of \rasync's \cref{clause:a5} under any message scheduling ($c_r \ge n - f - b$ for \mahimahi).
\end{lemma}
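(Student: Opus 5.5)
The plan is to reduce the statement to the counting lemma of \rasync (the counting property of \cref{clause:a5}), applied to the window's blocks instead of to a live continuation. The replay sets $c_r$ to the number of the $n$ candidate leaders of round $r$ that the window shows directly committed. It is therefore enough to show two things: every candidate in the set fixed by \cref{clause:a5} is one the window shows directly committed, and that set has the stated size. The scheduling does not enter once the window holds the required quorum at the rounds the counting lemma reads.

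The first step is to reproduce the counting argument inside the window. For \mahimahi at $\wasync = 5$, take the quorum $Q$ whose blocks the window holds at the vote round $r + 3$ and the decision round $r + 4$. Each block at round $r + 3$ references $n - f$ blocks of round $r + 2$, which in turn reference $n - f$ blocks of round $r + 1$, and so on down to round $r$. I would then use the double-counting argument of \mahimahi's counting lemma, read at the slot's own wavelength just as \Cref{lem:cert-unique} reads \mahimahi's certificate lemma. It gives at least $n - f$ round-$r$ blocks that are reached by, and hence voted for by, enough vote-round blocks. The Byzantine authors can equivocate or withhold their own leader blocks, so I would discount $b$ of these candidates. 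This leaves at least $n - f - b$ candidates, each with $n - f$ certificates at round $r + 4$ among the blocks of $Q$. Each such candidate meets the direct-commit threshold on the window. For BlueBottle's pair the argument is the same, one round shorter: with an honest quorum at rounds $r+1$ and $r+2$, honesty keeps twins out of the count, and the asynchronous variant's counting fraction follows directly.

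The second step is to argue that this evidence is checkable on the window. This is the last clause of \cref{clause:a5}: the window is the pivot's causal history over \pinterval rounds, and the hypothesis is that it contains the quorum's blocks at the rounds read. The supporting blocks are then present, so the replay's test counts each such candidate. Dividing by $n$ gives $c_r/n \ge (n-f-b)/n$, which is at least $1/3$ at $n = 3f+1$ as claimed in \Cref{sec:adaptive-replay}.

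The main obstacle is the first step. The counting lemma of \mahimahi is stated for an admissible continuation that populates the wave, and I must make sure it uses only the blocks of one quorum at the two top rounds. If the lemma implicitly needs all honest blocks at round $r+2$ or below, I would instead recover those from the causal histories of $Q$'s blocks, which the window also contains because it is closed under causal history. I would also check carefully that Byzantine twins inside $Q$ cost at most the $b$ subtracted.
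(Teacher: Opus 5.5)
Your proposal is correct and follows the paper's route. The paper likewise reads the window as a record of its own, whose support is the universe's restricted to it, notes that the hypothesis populates within it the rounds the counting lemma reads, and applies the counting property of \cref{clause:a5} there to obtain at least $pn$ candidates with direct-commit evidence inside the window, so that $c_r/n \ge p$ by the replay's construction; the only difference is that the paper invokes the counting lemma as a black box instead of unfolding \mahimahi's double-counting argument, and your fallback of recovering the lower rounds from the causal histories of $Q$'s blocks is exactly why that black-box application to the window is legitimate.
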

\begin{proof}
    Read the window as a record of its own: its support is that of the universe restricted to it, and by hypothesis a quorum populates within it the rounds the counting lemma reads for $r$. The counting property of \cref{clause:a5} then applies to every candidate at once and names at least $pn$ authors whose blocks carry direct-commit evidence inside the window, which for \mahimahi at $\wasync = 5$ is $c_r \ge n - f - b$ under any scheduling. By construction $c_r / n$ is the share of the $n$ candidate leaders that the replay marks committed at round $r$, which is the probability that a uniform coin names a directly committed leader there. \qed
\end{proof}

\begin{corollary}[Atomic broadcast]\label{cor:bab}
    \sysname implements atomic broadcast (\Cref{def:bab}) for any pair of rules satisfying \crefrange{clause:a1}{clause:a5}.
\end{corollary}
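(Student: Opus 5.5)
The plan is to derive each of the four properties of \Cref{def:bab} from the results already established, taking them in the order safety first, then liveness. The shared foundation is \Cref{thm:agreement-period}. It gives every honest validator the same sequence of periods, and hence the same wavelength schedule $\wl$. I would therefore fix that schedule once and then treat \sysname as a single composite rule on one schedule, which is exactly the setting of \Cref{thm:agreement}.

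\emph{Safety.} For \textbf{Total order}, I would invoke \Cref{cor:order-integrity}. Output is the sequence of commit verdicts over a settled prefix of slots, and by \Cref{thm:agreement} two validators agree on every slot both have settled. Their outputs are therefore prefixes of one sequence, and each leader's causal history is appended in a deterministic order, so no two honest validators output $b$ and $b'$ in opposite orders. For \textbf{Integrity}, the second half of \Cref{cor:order-integrity} gives at most one output per block. The clause ``only if it was proposed by its author'' follows from signed blocks under \cref{clause:a1}: a committed block is either a leader candidate or lies in a committed leader's causal history, and hash-based references bind its author's signature.

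\emph{Liveness.} For \textbf{Agreement}, suppose an honest validator commits $b$ at slot $s$. I would use \Cref{thm:liveness-ps} after \pgst and \Cref{thm:liveness-async} under asynchrony, which holds with probability $1$, to show that every honest validator eventually settles every slot up to $s$. \Cref{thm:agreement} then fixes the verdicts, so each of them outputs $b$. For \textbf{Validity}, the argument is as follows. An honest proposer's block at round $r$ is received by every honest validator eventually, since delivery is eventual. By \cref{clause:a1}, each honest author's next block references every block it holds that its previous block did not cover, so $b$ enters the causal history of every honest block from some round $r'$ on. By \Cref{thm:liveness-async}, some slot above $r'$ commits with probability $1$ and is eventually decided, along with everything below it. Because the leader block of that committed slot lies in the DAG's populated rounds above $r'$ (by \cref{clause:a1}), $b$ is in its history and is output.

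The step I expect to be the main obstacle is Validity. The difficulty is guaranteeing that a committed leader above $r'$ actually has $b$ in its causal history. A Byzantine committed leader need not reference $b$ directly. I would handle this by choosing a committed slot at least two rounds above the first round where every honest block covers $b$, so that the leader's block references a quorum, hence at least one honest block, of the round below. By transitivity it then contains $b$. I would also check that "all validators eventually commit it" only needs probability $1$, which is where \cref{clause:a5} enters through \Cref{thm:liveness-async}.
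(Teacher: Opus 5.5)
Your proposal is correct and follows essentially the same route as the paper. Safety comes from \Cref{thm:agreement} and \Cref{cor:order-integrity}, with signatures for integrity; agreement comes from settled prefixes reaching every slot with probability $1$ via \Cref{thm:liveness-ps,thm:liveness-async}; and validity comes from the observation that once every honest block covers $b$, any block above references a quorum containing an honest block, so the first committed leader above that round delivers $b$ (the paper takes a slot of the run in the proof of \Cref{thm:liveness-async}, or an honestly led slot after \pgst). The only differences are cosmetic. One round above that point already suffices where you ask for two, and the paper cites \Cref{thm:liveness-ps} separately for the post-\pgst case of validity.
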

\begin{proof}
    \emph{Agreement, total order, and integrity.} By \Cref{thm:agreement,cor:order-integrity}, honest validators agree on every slot both have settled, output settled slots in order, and output each block at one slot only, so a block one of them delivers lies in the output of every other whose settled prefix reaches as far, and with probability $1$ every honest validator's settled prefix eventually reaches as far (\Cref{thm:liveness-ps,thm:liveness-async}). Every output block is a block of the \pdag and so carries its author's signature (\Cref{sec:model-system}).

    \emph{Validity.} An honest block lies in the causal history of every honest block from some later round on: after \pgst the round above its own, and under asynchrony the round by which the reference rule of \cref{clause:a1} has spread it to every honest validator. Every block above that round, whatever its author, references $n - f$ blocks of the round below (\Cref{sec:model-dag}), at least one of them honest, and so holds it too. The first slot committed above that round therefore delivers it, whichever leader it has, once every slot below it is decided. With probability $1$ every slot is decided (\Cref{thm:liveness-ps,thm:liveness-async}), and a slot above that round commits: after \pgst an honestly led one (\Cref{thm:liveness-ps}), and under asynchrony a slot of the run in the proof of \Cref{thm:liveness-async}. \qed
\end{proof}

\para{Parameters and guards}
The loop of \Cref{sec:adaptive-loop} defers six details here. None of them enters a safety argument, and \Cref{thm:agreement-period} holds whatever they are set to, since each is a function of round numbers and of agreed state.

\emph{The gating rule.} A slot is evaluated once its interval's period is known. This needs no separate mechanism: a validator that has not closed the scan of interval $j$ derives no state for interval $j + 1$, so it has no wavelength for those rounds and decides nothing there. \emph{The one-interval lag.} A period computed from the scan of interval $j$ applies from interval $j + 1$ on, never to interval $j$ itself. Without the lag the period of a round would be needed in order to decide the very slots whose verdicts fix it.

\emph{The bounds on \pinterval.} Two are needed, and the first does not imply the second. $\pinterval \ge 2\,\pmaxperiod$ makes every interval, and every window of a pivot at round \pinterval or above, hold at least two \asyncslots of every candidate period. $\pinterval \ge \pmaxperiod + \wasync - 2$ makes a window of $\pinterval + 1$ rounds hold the \emph{decision} round of at least one \asyncslot of every candidate; under the first bound alone, a window resolves such a slot at some pivots and none at others, and the replay then scores a candidate on no evidence. The garbage-collection horizon must retain at least the last $\pinterval + 1$ rounds, which is what a window reads.

\emph{The warm-up interval.} Interval $0$ hands interval $1$ its own period whatever the replay answers, because its window holds the start-up rounds, where no rule has had a full wave to commit anything. The failover cannot fire there either: the pivot lies at round \pinterval or below and the agreed output's last commit at round $0$ or above, so the test $\ell + \pinterval < \mathrm{round}(A)$ of \Cref{alg:update} fails.

\emph{Hysteresis and ties.} The selection leaves the current period only for a candidate scoring below $(1 - \epsilon)$ times the current period's, and when it does leave, it takes the best candidate. The best candidate scores no worse than the current period and no worse than any candidate, is the current period whenever that one scores as well, and is otherwise the largest candidate attaining the best score, in whatever order the candidates are listed. \emph{The candidate set.} $K = \{1, 2, 4, \dots, \pmaxperiod\}$, the powers of two up to the bound. Every candidate divides \pmaxperiod, which is what makes the multiples of \pmaxperiod \asyncslots under every candidate period, hence defined and coin-carrying whatever the scan decides (\Cref{sec:adaptive-control}), and what keeps every derived period a divisor of the bound. Because \pcanary is odd it is coprime to every candidate, so a window holding two canary rounds whose decision round it retains holds a probe for every candidate of at least two: two consecutive multiples of the canary spacing cannot both be multiples of the period.

\begin{algorithm}[t]
    \caption{Period update on the committed \pdag}
    \label{alg:update}
    \begin{algorithmic}[1]
        \Statex \textit{The interval scan reads \rasync at the \controlslots (\Cref{sec:adaptive-control}): the \asyncslots of interval $j$ and, above its boundary, the multiples of \pmaxperiod. Evidence is the support of \Cref{lem:cert-unique}; $\beta(\wl)$ is the offset of the rule's blame round, $\wl - 2$ for the $3f + 1$ pair and $\wl - 1$ for BlueBottle's.}
        \Procedure{\pupdateperiod}{$j$} \Comment{scan interval $j$'s \controlslots upward; wait at undecided}
        \State $A \gets$ the first \controlslot with a commit verdict in the scan \Comment{earlier slots are skipped}
        \State \textbf{if} there is none: \Return \Comment{no commit keeps \period}
        \State $W \gets$ \Call{\pgetsubdag}{$A$, \pinterval} \Comment{causal history of $A$ at rounds $\mathrm{round}(A) - \pinterval$ and above}
        \State extend the agreed output from its last slot: \Comment{identical at every validator}
        \State \quad \ptrycommit on the causal history of $A$
        \State $\ell \gets$ round of the agreed output's last commit, or $0$ if none
        \State \textbf{if} $\ell + \pinterval < \mathrm{round}(A)$: $\period \gets 1$; \Return \Comment{failover; startup grace through round $\pinterval$}
        \State \textbf{for} $\period' \in K$: $L[\period'] \gets$ \Call{\preplay}{$W$, $\period'$} \Comment{expected output delay under period $\period'$}
        \State $\period^\star \gets \arg\min_{\period' \in K} L[\period']$ \Comment{ties keep \period, then favor the larger candidate}
        \State \textbf{if} $L[\period^\star] < (1 - \epsilon) \cdot L[\period]$: $\period \gets \period^\star$ \Comment{hysteresis; applies above round $(j+1)\,\pinterval$}
        \EndProcedure
        \Statex
        \Procedure{\pproberate}{$W$, $\period'$} \Comment{canary rounds that $\period'$ replays as \syncslots}
        \State $P \gets \{ r \in W : r \bmod \period' \neq 0,\ r \bmod \pcanary = 0,\ r + \wsync - 1 \le \mathrm{top}(W) \}$
        \State $s \gets |\{ r \in P : W \text{ holds } n{-}f \text{ supporters of } \Call{\pgetleader}{$r$} \text{ at } r + \wsync - 1 \}|$
        \State \Return $(s, |P|)$ \Comment{successes, probes}
        \EndProcedure
        \Statex
        \Procedure{\preplay}{$W$, $\period'$} \Comment{sum over $W$ of the expected delay from a round to its output}
        \State $(s, t) \gets$ \Call{\pproberate}{$W$, $\period'$}
        \State \textbf{for} each round $r$ of $W$, from the top downward: \Comment{pass 1: decide every slot under $\period'$}
        \State \quad $\wl \gets \wasync$ \textbf{if} $r \bmod \period' = 0$ \textbf{else} $\wsync$
        \State \quad \textbf{if} $r + \wl - 1 > \mathrm{top}(W)$: $\mathit{dec}_r \gets \mathit{com}_r \gets \mathrm{top}(W)$; \textbf{continue} \Comment{same penalty for every $\period'$}
        \State \quad \textbf{if} $\wl = \wsync$ \textbf{and} $r \bmod \pcanary \neq 0$ \textbf{and} $t > 0$: \Comment{unprobed \syncslot}
        \State \quad \quad $\mathit{dec}_r \gets \big(s\,(r + \wsync - 1) + (t - s)(r + \beta(\wsync))\big)/t$
        \State \quad \quad $\mathit{com}_r \gets \big(s\,(r + \wsync - 1) + (t - s)\,\mathrm{top}(W)\big)/t$; \textbf{continue} \Comment{commits w.p.\ $s/t$, else skips}
        \State \quad $a \gets$ \Call{\pfindanchor}{$r$, $\wl$} \Comment{earliest slot at round $\ge r + \wl$ with $\mathit{com}_a < \mathrm{top}(W)$}
        \State \quad \textbf{if} $\wl = \wsync$: $\mathit{cand} \gets \{\Call{\pgetleader}{$r$}\}$ \Comment{known leader: one candidate, exact}
        \State \quad \textbf{else}: $\mathit{cand} \gets$ all $n$ authors \Comment{hidden leader: every author is a candidate, weight $1/n$ each}
        \State \quad \textbf{for} each $v \in \mathit{cand}$, with $B$ the block of $v$ at round $r$ in $W$ (possibly none):
        \State \quad \quad \textbf{if} $n{-}f$ blocks at $r + \beta(\wl)$ blame $B$: $\mathit{dec}_v \gets r + \beta(\wl)$; $\mathit{com}_v \gets \mathrm{top}(W)$ \Comment{direct skip}
        \State \quad \quad \textbf{else if} $n{-}f$ blocks at $r + \wl - 1$ support $B$: $\mathit{dec}_v \gets \mathit{com}_v \gets r + \wl - 1$ \Comment{direct commit}
        \State \quad \quad \textbf{else}: $\mathit{dec}_v \gets \mathit{dec}_a$ \Comment{indirect, by the anchor}
        \State \quad \quad \phantom{\textbf{else}: }$\mathit{com}_v \gets \mathit{com}_a$ \textbf{if} $W$ holds the indirect threshold of supporters of $B$ \textbf{else} $\mathrm{top}(W)$
        \State \quad $\mathit{dec}_r \gets \mathrm{mean}_{v \in \mathit{cand}}\, \mathit{dec}_v$;\quad $\mathit{com}_r \gets \mathrm{mean}_{v \in \mathit{cand}}\, \mathit{com}_v$ \Comment{expectation under a uniform coin}
        \State \textbf{for} each round $r$ of $W$, from the top downward: \Comment{pass 2: output at the first commit $\ge r$}
        \State \quad $C_r \gets \min(\mathit{com}_r, C_{r+1})$
        \State \textbf{for} each round $r$ of $W$, from the bottom upward: \Comment{pass 3: wait for every lower decision}
        \State \quad $G_r \gets \max(G_{r-1}, \mathit{dec}_{r-1})$
        \State \Return $\sum_r \big(\max(C_r, G_r) - r\big)$
        \EndProcedure
    \end{algorithmic}
\end{algorithm}

\para{The replay, in full}
\Cref{alg:update} states the update as the implementation runs it. The scan walks the \controlslots of the interval upward, waits at an \pundecided \controlverdict, takes the first commit as the pivot $A$, and keeps the period if every \controlslot is a skip. The \controlslots are fixed by rule and not read off the coin shares a validator holds: sets read from two views can differ, and two validators can then take their pivot at different rounds of one \pdag (\Cref{app:lean}). The scan then extends the agreed output over $A$'s causal history and applies the failover before consulting the score, so that a stalled output always gets $\period = 1$, whatever the score. \preplay scores a candidate $\period'$ by the expected delay from a round of the window to the output of its blocks, in three passes: the first decides every slot under $\period'$, exactly for a probed \syncslot, at the probes' success rate for an unprobed one, and averaged over the $n$ candidate leaders for an \asyncslot; the second gives each round the first commit at or above it; the third gates that on the decision of every lower slot, which accounts for head-of-line blocking. The cost is $O(|K| \cdot |W|)$ for $|K|$ candidates over a window of $|W|$ blocks.

Two approximations remain, both deterministic and identical for every candidate. Where the rule would decide a slot from the causal history of the anchor that the candidate would have used, the replay asks only whether the slot's indirect threshold of supporters lies anywhere in the window, the window being the pivot's own causal history and so standing in for the anchor's. And a slot that a candidate never commits inside the window is charged the window's top round, so the rounds it holds back are deferred there rather than left undefined. At the wavelengths of \Cref{lem:replay-starvation}, the first pass cannot be starved on the asynchronous side, provided the window holds a quorum's blocks, honest for BlueBottle's pair, at the rounds the counting lemma reads: the counting property counts support on the \pdag, while the replay reads the window. On the synchronous side the probes carry the estimate. A probe succeeds only on a support quorum the \pdag actually holds (\Cref{sec:adaptive-probes}), so the adversary can suppress evidence of the partially synchronous rule but never manufacture it; what it can do is serve the canary rounds' leaders alone, which lifts the rate the replay extends to the unprobed \syncslots above what those slots would have shown. And \Cref{lem:replay-starvation} bounds a rate, not a latency: it says how often a uniformly chosen leader would have committed directly, not that the replay selects the fastest period. The failover of \Cref{sec:adaptive-loop} covers that gap, which is why a window in which only the canaries were served goes to the failover and not to the score.

\para{Instantiating the theorems}
\Cref{tab:discharge} maps each interface clause to the result that discharges it for each of the four base protocols, \cref{clause:a4} being Corollary~2 of BlueBottle for its synchronous variant; the \controlverdicts rely on \cref{clause:a5} at the \controlslots. The counting fraction $p$ of \cref{clause:a5} is bounded per rule. For \mahimahi at $\wasync = 5$ the counting lemma (Lemmas~12 and~13 of that paper) gives $n - f$ directly committable blocks per populated round at $n = 3f + 1$; counting only the $n - f - b$ honest ones among them bounds $p$ below by $(n - f - b)/n$, at least $1/3$ at $n \ge 3f + 1$; at $\wasync = 4$ it promises only one committed candidate per populated round (its Lemma~15), so $p \ge 1/n$. BlueBottle's asynchronous variant counts at $\wasync = 3$ (Lemmas~28 to~31 of that paper): on a quorum of honest validators populating the two rounds above a round, at least $n - 3f$ honest blocks of that round are directly committed, so $p \ge (n - 3f)/n$.

\begin{table}[t]
    \centering
    \caption{Discharge table mapping interface clauses to their discharging lemma or base-paper result. In Lean (\Cref{app:lean}), \cref{clause:a1} is the model, except for block production and the reference rule, which the liveness results and validity take as hypotheses (\Cref{app:lean-assumptions}); \cref{clause:a2,clause:a3} are proved for both pairs; and \cref{clause:a4} and the counting floor of \cref{clause:a5} are proved on a given \pdag for the base protocols whose cells in those rows cite a base paper, and enter the generic results as hypotheses.}
    \label{tab:discharge}
    \scriptsize
    \begin{tabular*}{\textwidth}{@{\extracolsep{\fill}}lcccc@{}}
        \toprule
        Clause                           & \mysticeti                              & \mahimahi                               & BlueBottle (sync.)           & BlueBottle (async.)          \\
                                         & ($\wsync = 3$)                          & ($\wasync \in \{4, 5\}$)                & ($\wsync = 2$)               & ($\wasync = 3$)              \\
                                         & ($n \ge 3f + 1$)                        & ($n \ge 3f + 1$)                        & ($n \ge 5f + 1$)             & ($n \ge 5f + 1$)             \\
        \midrule
        \ref{clause:a1} (substrate)      & \multicolumn{2}{c}{shared $3f{+}1$ DAG} & \multicolumn{2}{c}{shared $5f{+}1$ DAG}                                                               \\
        \ref{clause:a2} (waved evidence) & \Cref{lem:intersection}                 & \Cref{lem:intersection}                 & \Cref{lem:intersection}      & \Cref{lem:intersection}      \\
        \ref{clause:a3} (skips exclude)  & \Cref{lem:cert-unique}                  & \Cref{lem:cert-unique}                  & \Cref{lem:cert-unique}       & \Cref{lem:cert-unique}       \\
        \ref{clause:a4} (synchronous)    & base paper~\cite{mysticeti}             & --                                      & base paper~\cite{bluebottle} & --                           \\
        \ref{clause:a5} (asynchronous)   & --                                      & base paper~\cite{mahimahi}              & --                           & base paper~\cite{bluebottle} \\
        \bottomrule
    \end{tabular*}
\end{table}

\section{Lean Formalization}
\label{app:lean}

Every statement of \Cref{app:full-proofs} is machine-checked in Lean~4, on the open-source formalization of \mysticeti-family rules,\footnote{\leanlink} with the base protocols' clauses as hypotheses (\Cref{app:lean-assumptions}). The development contains no \texttt{sorry} and uses no axioms beyond Lean's three standard ones: propositional extensionality, quotient soundness, and choice. Every assumption is a hypothesis of the theorem that needs it, so each result is either proved outright on the DAG model or proved under the hypotheses that \Cref{app:lean-assumptions} lists.

The development spans about $21$k lines, partitioned so that a human reader audits only definitions and theorem statements, about $7$k lines. The proofs, about $10$k lines, are AI-generated, checked by the kernel and need no reading. A further $3$k lines of executable witnesses evaluate every definition on concrete \pdag{}s before any theorem uses it, so that a definition admitting no instance fails the build instead of making the theorems above it hold for free. A script enforces the partition: statement files are proof-free, model files carry no theorems, and no synchrony hypothesis reaches the rule's own properties.

The witnesses include the \pdag{}s on which a claim fails: a fully connected \pdag{} on which a floor read from the anchor's wavelength derives both a direct commit and an anchored skip of one slot (\Cref{fig:anchor} draws the skip on a sparser \pdag{}); a rotating-leader family, at every horizon, on which the selector of \Cref{alg:update} keeps $\period = 4$ and no block above round $2$ is ever output while the \asyncslots commit on schedule, together with the recovery once the failover hands the period to $1$ and a run of the coin decides the stalled slot; a \pdag{} on which two hops of the anchor search are both led by the one Byzantine validator with probability at least $19/256$, against the $(b/n)^2 = 1/16$ that a per-hop bound would claim; and two views of one \pdag{} that, reading the \controlslots off the coin shares they hold rather than by rule, take their pivot at different rounds.

\subsection{The Model and the Rule}
\label{app:lean-model}

The model is the \pdag{} and a clock that stamps each block: blocks carry a round, an author and references, a view is a set of blocks closed under reference, and a schedule assigns each slot a round, a leader and a kind. There are no validators and no messages; \Cref{app:lean-assumptions} lists the hypotheses that stand in for them.

The wavelength function is a parameter. One map sends a kind to its wave, \wsync at the synchronous kind and \wasync at the asynchronous one, and a second sends a round to its kind, asynchronous at every $\period$-th round and synchronous elsewhere. For the $3f + 1$ pair, safety holds at any wavelength function of at least two rounds per kind, the paper's $\wlof{r}$ being one; the liveness results bound it as each pair needs. The $3f + 1$ pair's instances of \Cref{thm:liveness-ps} ask $\wl \ge 3$: at $\wl = 2$ the vote round $r + \wl - 2$ is the slot's own round, where no block but the candidate itself votes for it, so none is certified and nothing commits. BlueBottle's pair reads no certify round, so $\wsync = 2$ suffices for it.

For the $3f + 1$ pair, \sysname is one anchored rule whose data at a slot of kind $\kappa$ are \mahimahi's at wave $\wl(\kappa)$: the certificate-quorum direct commit, the slot blame as a direct skip, one certified link, and a wave offset of $\wl(\kappa) - 1$, so that an anchor sits at round $r + \wl(\kappa)$ or above. At a constant wavelength the rule \emph{is} \mahimahi's at that wave, by definition, and at the constant wave $3$ its verdicts are \mysticeti's by a proved equivalence (\Cref{thm:conservativity}).

The interface of \Cref{sec:interface} is formalized as a composite: given a family of anchored rules, one per kind, a slot of kind $\kappa$ takes its wave offset, direct predicates and links from the rule of that kind, while the link count and tie-break, which the relation reads without reference to a slot, come from the family. If every rule of the family satisfies the laws, which include \cref{clause:a2,clause:a3} and the uniqueness of \Cref{lem:cert-unique} in the relation's terms, and the family agrees on those two parameters, then the composite satisfies them, each law at a slot being that slot's own rule's and the anchor's rule never entering. \Cref{thm:agreement} is that statement, and both pairs the paper instantiates are instances of it: the $3f + 1$ one by definition, since its two rules are one rule read at two waves, and BlueBottle's from the two variants' own laws, which agree on the link count and the tie-break.

The period is a configuration-sequence model written for this paper. Rounds are grouped into intervals; the \controlslots of a scan are the multiples of the interval's period up to its boundary and the multiples of \pmaxperiod above it; the state a scan carries is the period, the agreed output's next slot and the round of its last committed leader; and the update rule is any function of the pivot block and the current period. The derivation is a relation, so the gating rule of \Cref{sec:adaptive-loop} holds by construction: a state for interval $j + 1$ exists only once the scan of interval $j$ has closed, with a pivot or with none. Waiting is the absence of a derivation, and a validator that has not closed a scan has no wavelength for the rounds above it.

What the arc reuses from the existing formalization is the \pdag{} core, the decision rule at a fixed wavelength, the coin modeled by its effect, and the counting lemma. What is new for this paper is the decision rule with a per-slot wavelength, the anchor floor at $r + \wlof{r}$, agreement across slot kinds at each validator's own derived schedule, the \controlverdicts as \rasync on the sub-schedule of one scan's \controlslots, the coin as a uniform distribution against an adversary that adapts to every earlier draw, the handover corollary as the one point where the two rules interact, the configuration-sequence model of the period with its agreement theorem for any deterministic update rule, and the liveness results, all but two stated for any pair of lawful rules, read through a clause per rule for the synchronous commit, the silent leader's skip and the counting floor.

\subsection{What the Model Assumes}
\label{app:lean-assumptions}

Because the model has no execution, five of the paper's assumptions enter as hypotheses rather than as derived facts. Three of them the base protocols already make, in their own formalizations as much as in their papers. The other two are new to the formalization: one states how the coin is drawn and what the \pdag{} may depend on, the other the round by which \cref{clause:a1}'s reference rule has spread a block.

\para{Inherited from the base protocols}
\emph{Block production and delivery} enter as populated waves, wherever a liveness result derives a commit from the \pdag{}: a claim asks that a quorum's blocks occupy the rounds its argument reads. \mahimahi's counting lemma takes the same hypothesis and no network hypothesis besides, so nothing is added to it here. \emph{Partial synchrony} enters as a condition on references rather than as a clock, namely that from some round on every block of the reliable quorum references every such block of the round below. That condition can be guaranteed only after \pgst and is what \mysticeti's post-\pgst liveness becomes when stated on references alone; the honest-leader commit of \Cref{thm:liveness-ps}, and hence \cref{clause:a4}, is proved under it. The development also checks that commit under two pacing disciplines, both with a timeout that comes to clear $2\Delta + \mathrm{proc}$ from \pgst on. The timed one establishes the condition. The reactive one does without it and asks instead for the waits of \Cref{sec:interleave-schedule}: the leader wait at the rounds that carry one and, at wave $3$, the vote wait, in which a certifier either references the votes or waits its timeout out. \emph{A fair leader schedule} is what \mysticeti's liveness rests on as well. The round count after \Cref{thm:liveness-ps} needs the sharper form $\wsync\,(n - |T|) < n$ for the reliable quorum $T$, because the hops of the floor chain interact: under it the chain reaches a reliably led landing within $n - |T|$ hops, within $b$ once the remaining faulty validators have crashed, and hence within $(b + 1)(\wsync + f)$ rounds. At a finite period the model checks the count too, under $\wsync\,(n - |T|) + \wsync\lceil n / \period \rceil < n$, with $n - 1$ rounds to an honestly led \syncslot in place of $f$ and every \asyncslot above the floor decided; with $|T| = n - f$ at $\wsync = 3$ and $n = 3f + 1$ no period meets that condition.

\para{New for this paper}
No base formalization states either of the two new hypotheses. \emph{The coin's unpredictability} (\cref{clause:a5}) enters as uniform draws, and the per-round floor of committed candidates that \cref{clause:a5} supplies is a hypothesis the model does not derive: the floor may depend only on the draws already revealed, and the adversary may rebuild the \pdag{} at every draw, so long as it does not shrink that floor with the round's own coin. \mahimahi's formalization models its coin by its effect alone, as a clause saying that the leader keeps landing on committed candidates, and leaves uniformity in prose; the bounds of \Cref{thm:liveness-async} and of the round counts after \Cref{thm:liveness-ps} are the first that need the floor written down, which is why \cref{clause:a5} states it. \emph{Validity under asynchrony} rests on the consequence of \cref{clause:a1}'s reference rule, that a block every honest validator holds by some round stays in every later honest block's causal history. The substrate's references sit one round back, so the rule and the round it yields lie outside the model, and the claim takes that round as its hypothesis.

Deriving block production, the reactive half of the pacing, and the reference rule from the assumptions themselves needs an execution model; the proofs of \Cref{app:full-proofs} take the same steps informally. Under these hypotheses the model does check the quantitative claims: the expected wait above an \asyncslot floor at $\period = 1$, as $1/p^{\wasync}$ blocks of \wasync coin rounds at both waves the pair accepts, and the expected $1/(1 - (1 - p)^{c})$ intervals before a scan finds its pivot at a given count $c$ of \controlslots.

The ``with probability $1$'' of \Cref{thm:liveness-async} needs one more step, since a finite record populates finitely many rounds and can carry only a tail that vanishes with the horizon. It is checked twice: as a tail over blocks of $\wasync \cdot \pmaxperiod$ coin rounds, one block opening every $\lceil \wasync \cdot \pmaxperiod / \pinterval \rceil$-th interval, and then almost surely over a sequence of records with the coin drawn as a process on the infinite product. A block is $\wasync \cdot \pmaxperiod$ rounds long because above an interval boundary a scan reads only the multiples of \pmaxperiod, so that is what it takes to hold \wasync consecutive \controlslots and settle the scans below. The tail asks of the interval only $\pmaxperiod \le \pinterval$ and $\wasync \le \pinterval$: at the implementation's $\pinterval = 128$ and $\pmaxperiod = 64$, a block opens every third interval at $\wasync = 5$ and every second at $\wasync = 4$.

\subsection{Results}
\label{app:lean-results}

\Cref{tab:lean} maps each statement of \Cref{app:full-proofs} to its machine-checked counterpart. Each Lean result is an audited statement file with a separate kernel-checked proof.

\begin{table}[tp]
    \centering
    \caption{Statements of \Cref{app:full-proofs} and their machine-checked counterparts.}
    \label{tab:lean}
    \scriptsize
    \begin{tabularx}{\textwidth}{@{}>{\raggedright\arraybackslash}p{1.85cm}@{\hspace{11pt}}>{\raggedright\arraybackslash}p{1.45cm}@{\hspace{11pt}}X@{}}
        \toprule
        Statement                     & Lean result               & Content                                                                                                                                                                                                                                                                                                                                                                                                                                                                                                                                                                \\
        \midrule
        \Cref{lem:cert-unique}        & Safety                    & A directly skipped slot has no certificate for any candidate, and two certified candidates of one author and round coincide, at the slot's own wave; for BlueBottle's pair, per variant, a candidate with $n - f$ supporters at its decision round leaves every rival below $n - 3f$, and a direct skip leaves every candidate below it.                                                                                                                                                                                                                               \\
        \Cref{lem:intersection}       & Safety                    & A directly committed candidate of kind $\kappa$ at round $r$ is certified in the causal history of every block at round $r + \wl(\kappa)$ or above, whatever wave that block's own slot carries; for BlueBottle's pair, per variant, $n - 3f$ of the candidate's supporters lie there.                                                                                                                                                                                                                                                                                 \\
        \Cref{thm:agreement}          & Safety,\newline Interface & Two views deciding one slot reach the same verdict by any routes; at the interface level, a family of rules whose laws hold and that agree on the link count and tie-break composes into one whose laws hold, and both pairs the paper instantiates are such a composite, the $3f+1$ one by definition and BlueBottle's from the two variants' own laws.                                                                                                                                                                                                               \\
        \Cref{cor:handover}           & Safety                    & A slot directly committed in one view is committed by every view that finds a committed anchor for it, whichever rule decides that anchor, and no view skips it.                                                                                                                                                                                                                                                                                                                                                                                                       \\
        \Cref{cor:order-integrity}    & Ledger                    & Over a prefix each view has settled, the committed-leader sequences and the ledgers coincide, and a ledger never drops a block and delivers each block at one slot.                                                                                                                                                                                                                                                                                                                                                                                                    \\
        \Cref{thm:liveness-ps}        & Liveness                  & The honest-leader commit, the crashed-leader skip from $n - f$ blames, partial dissemination not deferring an \asyncslot's decision, and the descent of the floor chain from a commit above; the round counts at $\period = \infty$ and $\period = 1$. Stated for any lawful rule except partial dissemination and the reactive pacing, which are stated per pair; checked at both pairs.                                                                                                                                                                              \\
        \Cref{thm:liveness-async}     & Period,\newline Coin      & Above every interval some interval finds a pivot almost surely, the \controlverdicts of a scan settling once \wasync consecutive \controlslots commit directly; the failover hands the next interval $\period = 1$; at $\period = 1$ a run of \wasync commits decides every slot below it; and the tail vanishes, almost surely and for every slot at once. Stated for any pair of lawful rules with $\wsync \le \wasync$ and the counting floor a parameter: $n - f - b$ for the $3f + 1$ pair at $\wasync = 5$, $1$ at $\wasync = 4$, and $n - 3f$ for BlueBottle's. \\
        \Cref{thm:agreement-period}   & Period                    & The \controlslots of a scan are a function of the period, the period bound and the boundary, their verdicts agree across views, and the state is agreed under any update rule and at each validator's own derived schedule, for any lawful rules and at both pairs.                                                                                                                                                                                                                                                                                                    \\
        \Cref{thm:conservativity}     & Safety                    & At a schedule of one kind the composite decides exactly as that kind's rule, for any family that agrees on the link count and tie-break, and so at both pairs; for the $3f + 1$ pair the rule at a constant wavelength is \mahimahi's at that wave by definition, and at the constant wave $3$ its verdicts are exactly \mysticeti's.                                                                                                                                                                                                                                  \\
        \Cref{lem:replay-starvation}  & Coin,\newline Replay      & $c_r / n$ is at least the counting fraction on the \pdag{} under any scheduling, $(n - f - b)/n$ for the $3f + 1$ pair at $\wasync = 5$ and $(n - 3f)/n$ for BlueBottle's, and at both pairs also on the window once a quorum, honest for BlueBottle's pair, has populated within it the rounds the counting lemma reads; a probe's success is a support quorum the \pdag{} holds.                                                                                                                                                                                     \\
        \Cref{cor:bab}                & Broadcast                 & \Cref{def:bab} clause by clause over settled prefixes: agreement, integrity and total order, validity after \pgst with the first commit two rounds up and, under asynchrony, with the first commit above the round by which the reliable validators have referenced the block.                                                                                                                                                                                                                                                                                         \\
        \Cref{alg:update}             & Replay                    & The window's evidence at both pairs, read through the rule's support, the three passes in exact rationals, the hysteresis and the tie rule, the odd canary probing every candidate, and every answer a divisor of \pmaxperiod within $[1, \pmaxperiod]$.                                                                                                                                                                                                                                                                                                               \\
        Parameters\newline and guards & Period,\newline Replay    & Gating rule; both bounds on \pinterval; the weaker one alone leaves a window resolving an \asyncslot at some pivots and none at others; the warm-up interval; every derived period in range.                                                                                                                                                                                                                                                                                                                                                                           \\
        \Cref{app:commit-prob}        & Timeout                   & A certificate forms with probability $1$ on a unanimous vote round and $f/n$ when one block abstained, and the $1/P_2$ wait is the layer cake of the tail probabilities.                                                                                                                                                                                                                                                                                                                                                                                               \\
        \bottomrule
    \end{tabularx}
\end{table}

\para{What is not checked}
For BlueBottle's pair everything is checked except the agreement of the \outputreading and the \controlreading on direct verdicts (\Cref{app:full-proofs}) and \Cref{app:commit-prob}, which concerns the certificate layer that pair lacks. The replay's window is read through each pair's support, the decision-round votes for BlueBottle's, and \Cref{lem:cert-unique,lem:intersection} are checked per rule, at the slot's wave for the $3f + 1$ pair and inside each variant's own development for BlueBottle's. Outside the model altogether are the evaluation of \Cref{sec:evaluation} and any reading of the schedule's clock as wall-clock time.

\section{Partially Synchronous Protocols Under a Mistimed Leader Timeout}
\label{app:commit-prob}

A commit rule that decides on one round of evidence survives asynchrony better than one that needs two. DAG-protocol designers know this as folklore; to the best of our knowledge it has never been written down. We formalize it here.
We consider the two partially synchronous protocols of this paper, \mysticeti and BlueBottle-PS, when their leader timeout $T$ is smaller than the link delay $D$. The commit rule of BlueBottle-PS ($\wsync = 2$) keeps committing with a constant probability. The commit rule of \mysticeti ($\wsync = 3$) stalls. Two observations explain the difference: a larger quorum increases the probability of referencing the leader block, and requiring a single layer of evidence increases the probability of a direct commit.
This is not a liveness statement: the delay $D$ is bounded, so configuring $T \ge D$ restores both rules (\Cref{sec:protocol}). Instead, we explain a latency collapse under a too-small timeout, and show why it depends on the number of quorum layers.

\subsection{Observation 1: Minimum-Quorum References}
\label{app:commit-prob-min-quorum}

A validator waits up to $T$ for the leader block before proposing, and the timer is armed at its own proposal. With every link delayed by $D > T$, the timer fires before any block of the round arrives, so the validator proposes the instant its threshold clock reaches $n - f$ blocks. As a consequence, every block carries minimum-quorum references. It includes exactly $n - f$ previous-round blocks, which are the earliest to arrive, and never more. We verified this on the DAG of the fixed-delay runs at $n = 10$: every block of both rules references exactly $n - f$ blocks.

We model this by assuming that at each validator the arrival order of a round's blocks is uniformly random. We assume equal stake and that a validator always references its own block.
A validator references its own block plus the $n-f-1$ earliest blocks of the other $n-1$. A leader block $L$ is therefore referenced with probability
\begin{equation}
    p = \frac{n-f-1}{n-1}.
    \label{eq:link}
\end{equation}
The chance of referencing the leader block is thus larger when the quorum threshold required to advance a round, $n - f$, is larger. At $n = 10$, BlueBottle-PS waits for $n - f = 9$ blocks and references the leader with probability $0.889$, while \mysticeti waits for $n - f = 2f+1 = 7$ blocks and gets $0.667$. The seemingly harder quorum helps, because the same quorum sets how long a validator waits before proposing.

\subsection{Observation 2: Layers of Evidence}
\label{app:commit-prob-layers}

\paragraph{One layer of evidence.}
The commit rule of BlueBottle-PS, at $n \ge 5f+1$, operates with $\wsync = 2$. The votes of the single decision round are the certificates (\Cref{sec:model}).
With $p$ from~\eqref{eq:link}, the total votes $V$ for $L$ follow a binomial distribution, with the leading $1$ being $L$'s author voting for itself:
\begin{equation}
    V = 1 + \mathrm{Bin}(n-1,\,p).
    \label{eq:votes}
\end{equation}
A direct commit requires at least $n-f$ votes for the leader block, so the commit probability is
\begin{equation}
    P_1 = \Pr[V \ge n-f] = \sum_{k = n-f-1}^{n-1} \binom{n-1}{k}\, p^{k} (1-p)^{n-1-k}.
    \label{eq:p1}
\end{equation}

\paragraph{Two layers of evidence.}
The commit rule of \mysticeti, at $n \ge 3f+1$, operates with $\wsync = 3$. A certificate is a certify-round block referencing $n - f$ votes, and a direct commit needs $n - f$ certificates.
With minimum-quorum references, a certifier holds exactly $n-f$ references. So all of them must be votes. With $v$ votes in the round, the chance of a certificate forming is hypergeometric:
\begin{equation}
    \pi(v) = \frac{\binom{v}{n-f}}{\binom{n}{n-f}}.
    \label{eq:cert}
\end{equation}
This probability is equal to $1$ at $v = n$ and $f/n$ at $v = n-1$, and is negligible below. At $n = 10$ the $v = n-1$ term contributes $0.001$ to the commit probability, hence
\begin{equation}
    P_2 \approx \Pr[V = n].
    \label{eq:p2}
\end{equation}
The second layer turns the rule into an all-or-nothing filter: the slot commits directly only if every vote-round block voted.

Under independent arrival orders, $\Pr[V = n] = p^{\,n-1}$. This yields $0.026$ at $n = 10$ and $4 \cdot 10^{-9}$ at $n = 50$. Independence is the one optimistic assumption. Arrival orders are positively correlated across validators because a block proposed early arrives early everywhere. On the fixed-delay DAG two validators' reference sets overlap in $3.7$ blocks against $3.1$ under independence, and $V$ has fatter tails with the same bulk. $P_1$ reads the bulk and is unaffected, with a measured $\Pr[V \ge n-f] = 0.652$ against $0.650$. $P_2$ reads the extreme tail, so $p^{\,n-1}$ is a floor. We measured $\Pr[V = n] = 0.10$ against $0.026$ at $n = 10$.

The consequence for progress follows. An undecided slot is decided only by the next direct commit above it, its anchor, and that commit decides every slot below it at once. Nothing accumulates; every slot simply waits for the next direct commit, which takes $1/P_2$ rounds in expectation. At $n = 50$ it never comes.

\subsection{Experimental Validation}
\label{app:commit-prob-validation}

\paragraph{Fixed delay.}
We tested a setup where every link is delayed by a fixed $D = 800\,$ms, with $T = 100\,$ms. We ran \mysticeti and BlueBottle-PS at $n \in \{10, 50\}$ across three seeds. Every run advances at $1.2$ rounds per second. The direct-commit rates per round are detailed in \Cref{tab:trunc}. At $n = 10$ the one-layer rule commits at the predicted rate. The two-layer rule commits directly in about one round in ten. This is four times the independent-order floor, but decisions arrive in bursts separated by $20$--$170\,$s in which no slot is decided. Transactions wait for the next burst, with a mean latency of $16$--$90\,$s, and one seed decides nothing in $180\,$s. At $n = 50$ the one-layer rule again matches $P_1$ within $3\%$. The two-layer rule decides nothing in any seed. The DAG advances at the same pace of about $210$ rounds and $1700$ leader timeouts, but the commit rule never fires, since without a direct commit there is no anchor either.

\paragraph{Partial asynchrony.}
We apply the random asynchronous model of Danezis et al.~\cite{random-network}, in which messages follow a random rather than an adversarial schedule, as instantiated in \Cref{sec:evaluation}: each message is delayed by $100$--$150\,$ms, past $T = 100\,$ms, with probability $\phi$. We evaluated $n = 10$ and $n = 50$ across three seeds. \Cref{fig:phi} plots latency and the direct-commit rate. Both rules interpolate between their healthy value and their minimum-quorum value of \Cref{tab:trunc}. BlueBottle-PS's direct-commit rate settles on the $P_1$ plateau of $0.74$ from $\phi \approx 0.3$. \mysticeti's rate falls toward its measured $\Pr[V = n]$ of $0.10$, which is above the independent-order floor. There is no threshold. \mysticeti's latency grows about $1.4$ times per five points of $\phi$. Sampling windows without any commit appear at $\phi = 0.25$ and dominate from $\phi = 0.40$. BlueBottle-PS pays at most five times its healthy latency and stays live throughout. The partial and full random-network panels of \Cref{sec:evaluation} are the $\phi = 0.3$ and $\phi = 1.0$ points. At $n = 50$ the same shape appears one step earlier: BlueBottle-PS saturates at $1.1$\,s with a direct-commit rate of $0.55$--$0.60$, on its closed form $P_1 = 0.59$, while \mysticeti's rate has fallen to $0.17$ by $\phi = 0.2$ and its latency risen to $2.7$\,s. Beyond that no slot is decided, and the runs cannot complete, as the simulator's cost per round grows with the undecided slots; \Cref{fig:phi} marks these points as off-scale. The closed forms describe only the minimum-quorum regime of \Cref{tab:trunc}, so they do not predict where the stall begins.

\begin{table}[!tp]
    \centering
    \caption{Direct commits per round and latency with minimum-quorum references ($D = 800\,$ms on every link, $T = 100\,$ms), measured over three seeds against the closed forms \eqref{eq:p1} and \eqref{eq:p2}.}
    \label{tab:trunc}
    \scriptsize
    \begin{tabular*}{\textwidth}{@{\extracolsep{\fill}}llccl@{}}
        \toprule
        Rule                         & $n$  & Direct commits per round & Model                         & Latency                  \\
        \midrule
        \mysticeti ($\wsync = 3$)    & $10$ & $0.10$ (one seed: none)  & $P_2 = 0.027$                 & $16$--$90\,$s, in bursts \\
        \mysticeti ($\wsync = 3$)    & $50$ & $0$ in all seeds         & $P_2 \approx 4 \cdot 10^{-9}$ & stalled, no decision     \\
        BlueBottle-PS ($\wsync = 2$) & $10$ & $0.68$--$0.75$           & $P_1 = 0.736$                 & $3.9$--$4.1\,$s          \\
        BlueBottle-PS ($\wsync = 2$) & $50$ & $0.57$--$0.62$           & $P_1 = 0.588$                 & $4.9$--$5.4\,$s          \\
        \bottomrule
    \end{tabular*}
\end{table}

\begin{figure}[tp]
    \centering
    \includegraphics[width=\linewidth]{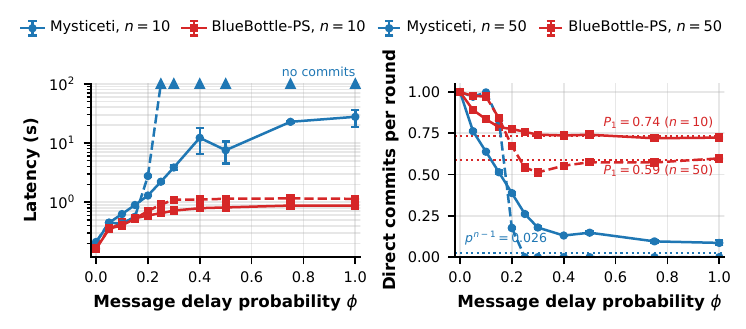}
    \caption{\mysticeti and BlueBottle-PS against the probability $\phi$ that a message is delayed past the leader timeout ($n = 10$ solid, $n = 50$ dashed, three seeds, $\phi = 0$ is the runs' healthy phase). Left: mean latency, log scale. Right: direct commits per round, with the closed forms $P_1$~\eqref{eq:p1} at both committee sizes and the independent-order floor $p^{\,n-1}$~\eqref{eq:p2} at $n = 10$ as guides. \mysticeti at $n = 50$ commits nothing beyond $\phi = 0.2$; those points are drawn off-scale at the top of the latency panel and at zero commits (see text).}
    \label{fig:phi}
\end{figure}

\section{Evaluation Details}
\label{app:evaluation-details}

We present the full setup behind \Cref{sec:evaluation}. We detail the network conditions and explain how to read the figures. We then give the per-panel numbers for both pairs of protocols at $n = 10$ and $n = 50$, and report the runs at $n = 10$.

\subsection{Setup}
\label{app:eval-setup}

\para{Parameters}
\Cref{tab:eval-parameters} lists the parameters for the simulator and the load generator in its top block. The quorum timeout defines the wait for stragglers past the quorum on rounds without a leader wait. The load generator batches transactions every $50$\,ms.
The bottom block of \Cref{tab:eval-parameters} lists the four parameters of \sysname. The \pinterval defines the replay window and the reaction horizon. This spans about $6$\,s at the healthy round rate. The \algvar{maxPeriod} is the starting period, setting one \asyncslot in $64$. The hysteresis is the threshold of improvement a candidate period must show to be adopted. The canary keeps the leader wait on every $31$st asynchronous round, so the committed DAG records the synchronous rule's evidence at any period; $31$ is coprime to the candidate periods, so the probes fall on synchronous slots of every candidate.

We configure the base protocols as shown in \Cref{tab:rules}. We run \mahimahi at $\wasync = 5$.

\begin{table}[tp]
    \centering
    \caption{Simulator, load, and \sysname parameters common to every run at $n = 10$.}
    \label{tab:eval-parameters}
    \scriptsize
    \begin{tabular*}{\textwidth}{@{\extracolsep{\fill}}ll@{}}
        \toprule
        Parameter & Value \\
        \midrule
        Committee size $n$ & $10$ ($f = 3$ for the $3f+1$ pair, $f = 1$ for the $5f+1$ pair) \\
        Topology, per-link latency & full mesh, uniform in $25$--$50$\,ms per message \\
        Leader timeout, quorum timeout & $100$\,ms, $40$\,ms \\
        Offered load, transaction size & $1000$\,tx/s system-wide ($100$\,tx/s per validator), $512$\,B \\
        Load generator batching & every $50$\,ms \\
        Run & $450$\,s: healthy $0$--$30$, condition $30$--$330$, healthy $330$--$450$ \\
        Sampling, seeds & $5$\,s windows, $7$ seeds \\
        Leader schedule & round-robin, one leader per round \\
        \midrule
        \pinterval & $128$ rounds \\
        $\algvar{maxPeriod}$ (initial period) & $64$ \\
        Hysteresis $\epsilon$ & $10\%$ \\
        Canary & every $31$st asynchronous round \\
        \bottomrule
    \end{tabular*}
\end{table}

\para{Conditions}
\Cref{tab:eval-conditions} details the six conditions of \Cref{sec:evaluation} with their parameters. Every model adds delay on top of the link latency and never drops a message.
The leader-delay adversary delays only the current round-robin leader's outbound blocks. This operates as a mute rather than an eclipse, as the targeted validator still receives blocks. The adversary tracks the round from the blocks it sees and is blind to the coin. The hidden leaders of \asyncslots are therefore never hit. Crashes are permanent. The last $f$ validators stop at $30$\,s.
The random-delay conditions instantiate the random asynchronous model of Danezis et al.~\cite{random-network}. Delays are drawn per message independently.

\para{Measurement}
We measure latency from a transaction's submission to the commit of its block, per replica, as the mean over each $5$\,s window. A panel plots the median across the seven seeds after a $3$-tick rolling median. This rolling median removes isolated single-window spikes of a backlogged protocol. Windows without a commit appear as gaps.
The bottom row of each figure breaks its axis when the partially synchronous protocol runs off the lower band.
The period trace is the per-replica period averaged across the committee. A value between two candidates indicates replicas mid-transition.

\begin{table}[tp]
    \centering
    \caption{The six network conditions of \Cref{fig:weather-mm,fig:weather-bb}, applied from $30$\,s to $330$\,s.}
    \label{tab:eval-conditions}
    \scriptsize
    \begin{tabularx}{\textwidth}{@{}l@{\hspace{8pt}}l@{\hspace{8pt}}>{\raggedright\arraybackslash}X@{}}
        \toprule
        Panel & Condition              & Parameters                                                                        \\
        \midrule
        (i)   & small leader delay     & leader's outbound blocks delayed by $30$\,ms, below the timeout                   \\
        (ii)  & large leader delay     & leader's outbound blocks delayed by $125$\,ms, above the timeout                  \\
        (iii) & permanent crash faults & the last $f$ validators stop at $30$\,s ($3$ / $1$), no recovery                  \\
        (iv)  & partial random network & each message delayed with probability $0.3$, uniformly in $100$--$150$\,ms        \\
        (v)   & full random network    & the same with probability $1$                                                     \\
        (vi)  & high jitter            & each message delayed by an exponential draw of mean $75$\,ms, capped at $400$\,ms \\
        \bottomrule
    \end{tabularx}
\end{table}

\subsection{Per-Panel Results}
\label{app:eval-results}

\Cref{tab:eval-results} provides the degraded plateau of the three lines per pair and panel. We calculate the plateau as the median of the $5$\,s windows between $70$\,s and $330$\,s. The table lists the time for \sysname to reach period $1$ after the onset and the time to return to $64$ after the condition lifts. The healthy rows represent the first $30$\,s. ``Stalled'' marks a protocol that commits in fewer than one window in ten. \mysticeti in panel (v) of \Cref{fig:weather-mm} (\Cref{sec:eval-degraded}) commits in $43\%$ of the windows at a $30$\,s median.

We detail two features of the period trace highlighted in \Cref{sec:evaluation}. In the crash panel (iii) of \Cref{fig:weather-mm}, the period alternates between $64$ and a value in $\{1, 2, 4, 8\}$ every second or third interval. This occurs identically in every seed because a probe landing on a crashed round-robin leader reads as a starved slot. Each dip costs a few windows of \mahimahi latency and causes the $7\%$ gap to \mysticeti. After recovery in panels (ii), (iv), and (v) of \Cref{fig:weather-mm} (\Cref{sec:eval-degraded}), the period drops to $2$ for one interval $20$--$70$\,s after the condition lifts, and one $5$\,s window commits at about $0.25$\,s. In panel (i) the period drops to $1$ or $2$ for one interval once or twice during the condition, each time costing one window at about $0.3$\,s.

\begin{table}[!htbp]
    \centering
    \caption{Per-panel results at $n = 10$: degraded-plateau latency (ms) of the partially synchronous protocol, the asynchronous protocol, and \sysname; \sysname's time to reach period $1$ after the onset and to return to $64$ after the condition lifts.}
    \label{tab:eval-results}
    \scriptsize
    \begin{tabular*}{\textwidth}{@{\extracolsep{\fill}}llrrrll@{}}
        \toprule
        Pair & Panel & Sync & Async & \sysname & To period $1$ & Back to $64$ \\
        \midrule
        $3f+1$ & healthy & $206$ & $287$ & $208$ & stays at $64$ & \\
        & (i) small leader delay & $303$ & $293$ & $306$ & 1--2 one-interval drops & \\
        & (ii) large leader delay & stalled & $334$ & $344$ & $10$--$20$\,s & $10$\,s \\
        & (iii) crash faults & $238$ & $334$ & $255$ & dips, see text & \\
        & (iv) partial random network & $3710$ & $685$ & $684$ & $15$--$20$\,s & $10$\,s \\
        & (v) full random network & $29\,800$ & $1121$ & $1122$ & $20$\,s & $5$--$15$\,s \\
        & (vi) high jitter & $3114$ & $1017$ & $1022$ & $15$--$25$\,s & $10$\,s \\
        \midrule
        $5f+1$ & healthy & $162$ & $208$ & $163$ & stays at $64$ & \\
        & (i) small leader delay & $236$ & $218$ & $238$ & a few one-interval drops & \\
        & (ii) large leader delay & stalled & $371$ & $374$ & $10$\,s & $10$\,s \\
        & (iii) crash faults & $169$ & $218$ & $198$ & dips, see text & \\
        & (iv) partial random network & $714$ & $672$ & $685$ & wanders & $0$--$10$\,s \\
        & (v) full random network & $872$ & $810$ & $806$ & wanders & $0$--$10$\,s \\
        & (vi) high jitter & $929$ & $908$ & $915$ & wanders & $0$--$10$\,s \\
        \bottomrule
    \end{tabular*}
\end{table}

\subsection{Committee of 10}
\label{app:eval-n10}

We repeat the six conditions with the same parameters at $n = 10$ ($f = 3$ for the $3f+1$ pair, $f = 1$ for the $5f+1$ pair) in \Cref{fig:weather-mm,fig:weather-bb}, on a longer timeline, which the smaller committee affords: $30$\,s healthy, the condition until $330$\,s, and healthy again until $450$\,s. \Cref{tab:eval-results} gives the per-panel numbers.

\Cref{claim:good,claim:async,claim:faults,claim:adaptive} hold at $n = 10$ as well. In the healthy phases \sysname stays within $1\%$ of the partially synchronous protocol ($208$ vs $206$\,ms and $163$ vs $162$\,ms) and $22$--$28\%$ below the asynchronous one. It reaches period $1$ within $10$--$25$\,s of the onset in every panel where a protocol stalls and returns to $64$ within $5$--$15$\,s of the condition lifting, and it tracks the better protocol within $3\%$ under the large leader delay ($344$ vs $334$\,ms) and within $1\%$ under the network-wide conditions: $684$ vs $685$\,ms under the partial random delay, $1122$ vs $1121$\,ms under the full one, and $1022$ vs $1017$\,ms under jitter. The crash dips of panel (iii) cost $7\%$ and $17\%$ for the two pairs, against $2\%$ at $n = 50$. Two things differ from $n = 50$. \mysticeti survives jitter at a $3.1$\,s plateau rather than $5.7$\,s, and BlueBottle-PS ties its asynchronous variant under the network-wide conditions instead of falling behind it, so \sysname sits between the two variants or just below both ($685$, $806$ and $915$\,ms) while its period wanders across the candidates, which costs nothing.

\begin{table}[tp]
    \centering
    \caption{Per-panel results at $n = 50$: degraded-plateau latency (ms) of the partially synchronous protocol, the asynchronous protocol, and \sysname; \sysname's time to reach period $1$ after the onset and to return to $64$ after the condition lifts.}
    \label{tab:eval-results-50}
    \scriptsize
    \begin{tabular*}{\textwidth}{@{\extracolsep{\fill}}llrrrll@{}}
        \toprule
        Pair & Panel & Sync & Async & \sysname & To period $1$ & Back to $64$ \\
        \midrule
        $3f+1$ & healthy & $213$ & $294$ & $215$ & stays at $64$ & \\
        & (i) small leader delay & $305$ & $295$ & $308$ & 0--1 one-interval drops & \\
        & (ii) large leader delay & stalled & $301$ & $313$ & $10$--$20$\,s & $10$\,s \\
        & (iii) crash faults & $356$ & $455$ & $363$ & dips, see \Cref{app:eval-results} & \\
        & (iv) partial random network & stalled & $731$ & $732$ & $15$\,s & $10$\,s \\
        & (v) full random network & stalled & $1167$ & $1168$ & $20$\,s & $10$\,s \\
        & (vi) high jitter & $5713$ & $1300$ & $1473$ & $20$--$45$\,s & $10$\,s \\
        \midrule
        $5f+1$ & healthy & $168$ & $212$ & $169$ & stays at $64$ & \\
        & (i) small leader delay & $238$ & $214$ & $238$ & stays at $64$ & \\
        & (ii) large leader delay & stalled & $220$ & $229$ & $10$\,s & $10$\,s \\
        & (iii) crash faults & $217$ & $268$ & $221$ & dips, see \Cref{app:eval-results} & \\
        & (iv) partial random network & $1041$ & $702$ & $705$ & $10$--$30$\,s & $5$--$15$\,s \\
        & (v) full random network & $1176$ & $835$ & $837$ & $15$--$45$\,s & $5$--$15$\,s \\
        & (vi) high jitter & $1023$ & $935$ & $944$ & wanders, $15$--$80$\,s & $0$--$10$\,s \\
        \bottomrule
    \end{tabular*}
\end{table}

\begin{figure}[!htbp]
    \centering
    \includegraphics[width=\linewidth]{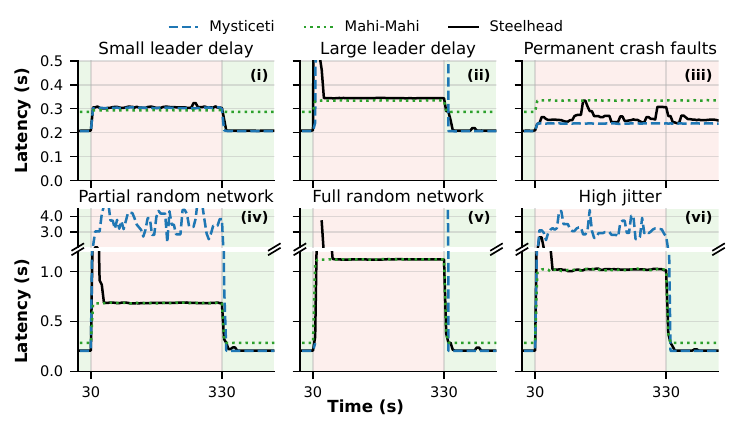}
    \caption{Latency over time of \mysticeti, \mahimahi, and adaptive \sysname ($3f+1$ pair, $n = 10$) under the six network conditions of \Cref{sec:eval-setup}, each applied from $30$\,s to $330$\,s (shaded).}
    \label{fig:weather-mm}
\end{figure}

\begin{figure}[!htbp]
    \centering
    \includegraphics[width=\linewidth]{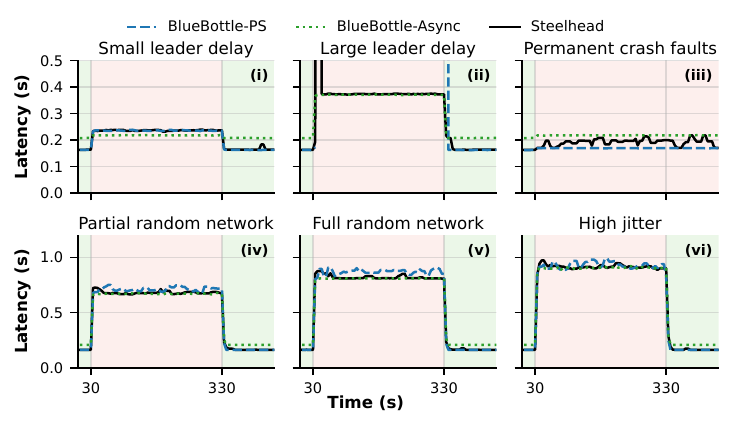}
    \caption{Latency over time of BlueBottle-PS, BlueBottle-Async, and adaptive \sysname ($5f+1$ pair, $n = 10$) under the six network conditions of \Cref{sec:eval-setup}, each applied from $30$\,s to $330$\,s (shaded).}
    \label{fig:weather-bb}
\end{figure}

\end{document}